
\documentclass[letterpaper, 10 pt, conference]{ieeeconf}  

\IEEEoverridecommandlockouts                              

\overrideIEEEmargins                                      



\usepackage{graphics} 
\usepackage{epsfig} 
\usepackage{amsmath} 
\usepackage{amssymb}  
\usepackage{xcolor}
\usepackage{nicefrac}
\usepackage{siunitx}
\def\BibTeX{{\rm B\kern-.05em{\sc i\kern-.025em b}\kern-.08em
    T\kern-.1667em\lower.7ex\hbox{E}\kern-.125emX}}

\usepackage{tabularray}
\UseTblrLibrary{amsmath} 
\usepackage{caption}
\usepackage{subcaption}

\usepackage[noconfig,amsmath,thmmarks]{ntheorem}
\theoremnumbering{arabic}
\theoremstyle{plain}
\RequirePackage{latexsym}
\theorembodyfont{\itshape}
\theoremheaderfont{\normalfont\bfseries}
\theoremseparator{}
\newtheorem{assumption}{Assumption}
\newtheorem{remark}{Remark}

\newtheorem{lemma}{Lemma}
\newtheorem{proposition}{Proposition}

\usepackage{tikz}
\usepackage{pgfplots}
\usepackage{cite}
\usepackage{makecell}


\newcommand{\covs}[1]{\mathrm{cov}\left( #1 \right)}
\newcommand{\E}[1]{\mathrm{E}\left\{#1\right\}}

\newcommand{\bigA}{\boldsymbol{A}}
\newcommand{\bigB}{\boldsymbol{B}}
\newcommand{\bigH}{\boldsymbol{H}}
\newcommand{\bigP}{\boldsymbol{P}}
\newcommand{\bigp}{\boldsymbol{p}}
\newcommand{\bigK}{\boldsymbol{K}}
\newcommand{\bigL}{\boldsymbol{L}}

\newcommand{\bigOmega}{\boldsymbol{\Omega}}
\newcommand{\bigSigma}{\boldsymbol{\Sigma}}
\newcommand{\bigx}{\boldsymbol{x}}
\newcommand{\bigy}{\boldsymbol{y}}
\newcommand{\bigu}{\boldsymbol{u}}
\newcommand{\bigxi}{\boldsymbol{\xi}}
\newcommand{\bigomega}{\boldsymbol{\omega}}

\newcommand{\bigepsilon}{\boldsymbol{\epsilon}}
\newcommand{\bigvarepsilon}{\boldsymbol{\varepsilon}}
\newcommand{\bigR}{\boldsymbol{R}}
\newcommand{\bigQ}{\boldsymbol{Q}}

\newcommand{\bigmathA}{\boldsymbol{\mathcal{A}}}

\newcommand{\bigC}{\boldsymbol{C}}
\newcommand{\bigD}{\boldsymbol{D}}
\newcommand{\bigZ}{\boldsymbol{Z}}
\newcommand{\bige}{\boldsymbol{e}}
\newcommand{\bigalpha}{\boldsymbol{\alpha}}
\newcommand{\bigbeta}{\boldsymbol{\beta}}
\newcommand{\bigpsi}{\boldsymbol{\psi}}

\usepackage{ifthen}
\newboolean{showcomments}
\setboolean{showcomments}{true} 
\ifthenelse{\boolean{showcomments}}{ 
	\newcommand\td[1]{\textcolor{red}{\textbf{TODO:} #1}} 
	\newcommand\tdd[1]{} 
	\newcommand\comment[1]{\textcolor{blue}{\textbf{Comment:} #1}} 
	\newcommand\commentd[1]{} 
	\newcommand\frage[1]{\textcolor{orange}{\textbf{Rückfrage:} #1}} 
	\newcommand\fraged[1]{}
}{ 
	\newcommand\td[1]{}
	\newcommand\tdd[1]{} 
	\newcommand\comment[1]{} 
	\newcommand\commentd[1]{} 
	\newcommand\frage[1]{}
	\newcommand\fraged[1]{}
} 

\newcommand\copyrighttext{%
  \footnotesize \textcopyright This work has been submitted to the IEEE for possible publication. Copyright may be transferred without notice, after which this version may no longer be accessible.}
\newcommand\copyrightnotice{%
\begin{tikzpicture}[remember picture,overlay]
\node[anchor=south,yshift=10pt] at (current page.south) {\fbox{\parbox{\dimexpr\textwidth-\fboxsep-\fboxrule\relax}{\copyrighttext}}};
\end{tikzpicture}%
}

	

\title{\LARGE \bf
Human-Variability-Respecting Optimal Control for Physical Human-Machine Interaction
}

\author{Sean Kille, Paul Leibold, Philipp Karg, Balint Varga, Sören Hohmann
\thanks{All authors are with the Institute of Control Systems (IRS), Karlsruhe Institute of Technology (KIT), 76131 Karlsruhe, Germany. Corresponding author is {\tt\small sean.kille@kit.edu}.}%
}

\begin{document}

\maketitle
\copyrightnotice

\begin{abstract}

Physical Human-Machine Interaction plays a pivotal role in facilitating collaboration across various domains. 
When designing appropriate model-based controllers to assist a human in the interaction, the accuracy of the human model is crucial for the resulting overall behavior of the coupled system. 
When looking at state-of-the-art control approaches, most methods rely on a deterministic model or no model at all of the human behavior. 
This poses a gap to the current neuroscientific standard regarding human movement modeling, which uses stochastic optimal control models that include signal-dependent noise processes and therefore describe the human behavior much more accurate than the deterministic counterparts.
To close this gap by including these stochastic human models in the control design, we introduce a novel design methodology resulting in a \textit{Human-Variability-Respecting Optimal Control} that explicitly incorporates the human noise processes and their influence on the mean and variability behavior of a physically coupled human-machine system.
Our approach results in an improved overall system performance, i.e. higher accuracy and lower variability in target point reaching, while allowing to shape the joint variability, for example to preserve human natural variability patterns. 
\end{abstract}
\section{INTRODUCTION} \label{sec:introduction}

Physical Human-Machine Interaction is becoming increasingly relevant as it focuses on machines that intuitively interact with and support humans in various applications. Such applications are e.g. construction tasks that are too complex to be fully automated or mobility-assisting or rehabilitating humans that recover from injuries, thereby playing a crucial role in the future of manual work~\cite{Carlson.2012}. 
Developments in these fields are strongly supported by research in shared control, which focuses on enhancing machine automation which continuously and parallel to the human interacts with a system haptically~\cite{Abbink.2012}. 
This aims at creating an improved interaction experience in which human and machine combine their strengths~\cite{Inga.2023}. One major challenge in shared control design lies in the task of parametrizing the automation, which is either done heuristically or based on a model of the human behavior.


Early shared control designs, particularly in the automotive domain, employ heuristically designed feedback controllers without detailed consideration of human behavior~\cite{Steele.2001, Flemisch.2008}. 
These solutions proved how haptic assistance can support the human in its task accomplishment while still keeping the human in the loop. 
However, the designs need heuristic tuning which takes time and is task-specific.
Approaches focusing on object manipulation or supporting hand movements often adapt the overall system impedance based on modeling humans as impedance systems~\cite{Lecours.2012, Tonietti.2005, Dong.2020, Ficuciello.2015, Li.2017}. 
These solutions lead to improved performance but usually adapt the impedance of a passive system, thereby missing an active assistance support by the automation. 
Other approaches involve game-theoretic controllers, which model human motion based on optimality principles~\cite{Franceschi.2023b, Flad.2017, Music.2020, Na.2013, Varga.2024}. 
These models are based on observations from movement sciences and neuroscience~\cite{Gallivan.2018,Todorov.2002}, leading to generalizable models on which an automation can be designed. 


All these mentioned designs assume a deterministic human behavior. 
However, neuroscientific literature shows that the human behaves stochastically~\cite{Abend.1982, Harris.1998} and adheres to the minimal intervention principle~\cite{Todorov.2002,Todorov.2004}, resulting in a reduced variability only in areas that are relevant for a successful task-completion. 
This pattern of a high variability in areas that are not of interest for the human task performance is termed task-dependent variability and is characteristic for human movements. 
The state-of-the-art linear-quadratic sensorimotor (LQS) model represents this variability~\cite{Todorov.2005}. The model includes additive and multiplicative noise processes in the modeled human action and perception cycle. This leads to a better accuracy in describing both the mean behavior and variability patterns~\cite{Karg.2023, Mitrovic.2011}. The reason for this superiority is that noise parameters not only affect variability, but also the mean behavior~\cite{Todorov.2005,Karg.2022}.

When looking at the state-of-the-art shared control design approaches, two observations stand out: A)~The effect of the human noise processes on the mean behavior is not reflected, leading to a suboptimal performance of model-based shared control designs as they are based on deterministic human models. B)~The control designs do not consider the inherent human variability but often interpret it as undesired noise, therefore urging the human to behave deterministically and thereby restricting the human in its natural high variability in task-irrelevant areas.  

 
This paper adresses these two research gaps by introducing a novel approach to design a controller which aims at assisting a human in physical Human-Machine Interaction. This \textit{Human-Variability-Respecting Optimal Control (HVROC)} approach implements and parametrizes an optimal controller as an automation. Research gap A is adressed by  explicitly taking into account the joint mean behavior of a stochastically modeled human and automation in an optimization-based parametrization.
As a basis for this approach and as a further contribution, we introduce the recursive calculation of the mean behavior and variance of a coupled stochastic human-machine system. Research gap B is tackled by introducing a variability-pattern shaping of the joint variance.
The resulting controller's benefits are twofold:
First of all, our \textit{HVROC} considers the effect of the human noise processes on the joint mean behavior, enhancing the overall system performance in task accuracy, speed and goal point variability.
Secondly, our design enables to influence the resulting variability a) either to mimic the high human natural variability in task-irrelevent areas b) or restrict it. Such an adaption is possible due to the newly formulated relationship between the automation and the resulting mean and variance of the coupled system.
The benefits of our approach are shown in simulation using two example systems.

\section{Human-Variability-Respecting Optimal Control} 

In this section we firstly introduce our system model in Subsection~\ref{sec:system}, which is influenced by a human and an automation. 
The stochastic human model is described in Subsection~\ref{sec:humanModel} before we present our novel automation design in Subsection~\ref{subsec:HVROC}. We thereby focus on point-to-point (P2P) movements without loss of generality.  


\subsection{Human-Machine System Model}\label{sec:system}
We define a system  which is manipulated by both a human and an automation, thereby adhering to the current standart representation of human biomechanical movement modeling by~\cite{Todorov.2005}:
\begin{align}
\bigx_{t+1}= \bigA \bigx_t + \bigB_{\mathrm{A}} \bigu_{\mathrm{A},t} +\bigB_{\mathrm{H}} \bigu_{\mathrm{H},t}  + \bigxi_t + \sum_i^c \varepsilon_t^{(i)} \bigC_i \bigu_{\mathrm{H},t}, \label{eq:system}
\end{align}
where $\bigx\in\mathbb{R}^{n}$ denotes the system state, $\bigu_{\mathrm{H}}\in\mathbb{R}^{m_H}$ and $\bigu_{\mathrm{A}}\in\mathbb{R}^{m_A}$ the control variables of the human and automation, respectively. 
Additionally, we encounter an additive standard white Gaussian noise process with $\bigxi_t \in\mathbb{R}^n$ and the human control-dependent noise process $\sum_i \varepsilon_t^{(i)} \bigC_i \bigu_{\mathrm{H},t}\in\mathbb{R}^n$. The latter can be interpreted as a higher inaccuracy for faster movements. 
With $\bigA, \bigB_{\mathrm{A}},\bigB_{\mathrm{H}}, \bigC_i$ we introduce matrices of appropriate dimension and $\bigvarepsilon = \begin{bmatrix} \varepsilon_t^{(1)} \dots \varepsilon_t^{(c)} \end{bmatrix}^\intercal$ denotes a standard white Gaussion noise process ($\covs{\bigvarepsilon} = \boldsymbol{I}$).
The stochastic process ${\bigx_t}$ is initialized with $\mathrm{E}\{\bigx_0\}$ and $\mathrm{cov}(\bigx_0, \bigx_0) = \bigOmega_{0}^{\bigx}$. 
The automation typically does not have full access to the system states; it's output equation can be described as:
\begin{align}
    \bigy_{\mathrm{A},t} &= \bigH_{\mathrm{A}} \bigx_t, \label{eq:sysOutA}
\end{align}
while the human perception is complemented by the additive $\bigomega_t$ and multiplicative noise process $ \sum_i \epsilon_t^{(i)} \bigD_i \bigx_t$ in the output equation:
\begin{align}
        \bigy_{\mathrm{H},t} &= \bigH_{\mathrm{H}} \bigx_t + \bigomega_{t} + \sum_{i}^{d} \epsilon_t^{(i)} \bigD_i \bigx_t, \label{eq:sysOutH}
\end{align}
where $\bigy_{\mathrm{H}}\in\mathbb{R}^{r_H}$ and $\bigy_{\mathrm{A}}\in\mathbb{R}^{r_A}$ denote the observed output by the human and the automation, respectively. 
The state-dependent noise translates to a higher inaccuracy in perception, the faster a movement is.
$ \bigomega$ and $\bigepsilon = \begin{bmatrix} \epsilon_t^{(1)}  \dots \epsilon_t^{(d)} \end{bmatrix}^\intercal$ with $\covs{\bigepsilon}=\boldsymbol{I}$ denote standard white Gaussian noise processes.
$\bigH_{\mathrm{A}}$, $\bigH_{\mathrm{H}}$ and $\bigD_i$ are matrices of appropriate dimension. 
All mentioned Gaussian noise processes are independent to each other and to $\bigx_t$.
The additive noise processes in human action and perception are composed of standard white Gaussian noise processes $\bigalpha_t \in \mathbb{R}^p$ and $\bigbeta_t \in \mathbb{R}^q$  ($\covs{\bigalpha, \bigalpha} = \covs{\bigalpha} =\boldsymbol{I}$ and $\covs{\bigbeta} =\boldsymbol{I}$): $\bigxi_t = \bigSigma^{\bigxi} \bigalpha_t$ and $\bigomega_t = \bigSigma^{\bigomega} \bigbeta_t$.


\subsection{Human Behavior}\label{sec:humanModel}

When regarding only the human actor in our system dynamics~\eqref{eq:system}, i.e. $\bigu_\mathrm{A}=\boldsymbol{0}$, the linear-quadratic sensorimotor (LQS) model for goal-directed human movements as introduced by~\cite{Todorov.2005} results. 
According to that, the human chooses $\bigu_{\mathrm{H},t}$ such that the performance criterion 
\begin{align}\label{eq:costHuman}
    J_\mathrm{H} \!=\!  \E{
    \bigx_N^\intercal \bigQ_{\mathrm{H},N}\bigx_{N} \! +\! \sum_{t=0}^{N-1} \left( \bigx_t^\intercal \bigQ_{\mathrm{H},t} \bigx_t \!+\! \bigu_{\mathrm{H},t}^\intercal \bigR_{\mathrm{H}} \bigu_{\mathrm{H},t} \right) },
\end{align}
becomes minimal, with $\bigQ_{\mathrm{H},t} \in \mathbb{R}^{n \times n}$ being symmetric and positiv semidefinite for $t = 0,...,N-1$ and $\bigR_\mathrm{H} \in \mathbb{R}^{m \times m}$ being symmetric and positive definite. 
Eq.~\eqref{eq:costHuman} can be transformed into a form that allows penalizing a deviation from a reference state (i.e. $\bigx_t - \bigx_\mathrm{ref}$) by introducing additional position states $p_{x,\mathrm{ref}}$ and $p_{y,\mathrm{ref}}$ with constant dynamics~\cite{Todorov.2002}, thus enabling the analyzis of P2P movements from a start point $\bigp_0$ to an end point $\bigp_{\mathrm{ref}}$ with $\bigp_t$ denoting the position state.

An approximate solution to this optimal control problem is presented in~\cite{Todorov.2005}, resulting in the control law $\bigu_{\mathrm{H},t} = - \bigL_{\mathrm{H},t} \hat{\bigx}_{\mathrm{H},t}$. 
This means that the human acts on the system based on its estimated system state $\hat{\bigx}_{\mathrm{H},t}$ and the control matrix $\bigL_{\mathrm{H},t}$.
The state estimation is updated recursively by a linear Kalman filter $\hat{\bigx}_{\mathrm{H},t+1}=\bigA \hat{\bigx}_{\mathrm{H},t}+ \bigB_\mathrm{H} \bigu_{\mathrm{H},t}+\bigK_{\mathrm{H},t}\left(\bigy_{\mathrm{H},t}- \bigH_{\mathrm{H}} \hat{\bigx}_{\mathrm{H},t}\right)$.  
Adhering to the derivation by~\cite{Todorov.2005} and~\cite{Karg.2022}, the approximate solution for control and filter matrix can be found by iterating between 
\begin{align}
   \bigL_{\mathrm{H},t} =& \Big( \bigR_{\mathrm{H}} + \bigB_{\mathrm{H}}^\intercal \bigZ^{\bigx_\mathrm{H}}_{t+1} \bigB_{\mathrm{H}}  \notag \\
   &+  \sum_{i}^{} \bigC_i^\intercal ( \bigZ^{\bigx_\mathrm{H}}_{t+1}+\bigZ^{\boldsymbol{e}_\mathrm{H}}_{t+1} ) \bigC_i  \Big)^{-1} \bigB_{\mathrm{H}}^\intercal \bigZ^{\bigx_\mathrm{H}}_{t+1} \bigA \label{eq:L_H}
\end{align} 
and
\begin{align}
    \bigK_{\mathrm{H},t} =& \bigA \bigP_t^{\bige_\mathrm{H}} \bigH_\mathrm{H}^\intercal \Big( \bigH_\mathrm{H} \bigP_t^{\bige_\mathrm{H}} \bigH_\mathrm{H} + \bigOmega^{\bigomega} \notag \\
    &+ \sum_{i}^{}\! \bigD_i ( \bigP_t^{\bige_\mathrm{H}}\! +\! \bigP_t^{\hat{\bigx}_\mathrm{H}}\! + \! \bigP_t^{\hat{\bigx}_\mathrm{H}  \bige_\mathrm{H}} \! + \! \bigP_t^{\bige_\mathrm{H} \hat{\bigx}_\mathrm{H}  }  ) \bigD_i^\intercal \Big)^{-1}, \label{eq:K_H}    
\end{align}
with the recursive equations for $\bigZ^{\bigx_\mathrm{H}}_{t}$, $\bigZ^{\boldsymbol{e}_\mathrm{H}}_{t}$, $ \bigP_t^{\bige_\mathrm{H}}$, $\bigP_t^{\hat{\bigx}_\mathrm{H}}$, $\bigP_t^{\hat{\bigx}_\mathrm{H} \bige_\mathrm{H}}$, $\bigP_t^{\bige_\mathrm{H} \hat{\bigx}_\mathrm{H}}$ being provided by~\cite{Todorov.2005} and $\bigOmega^{\bigomega} = \covs{\bigomega}$.

According to Lemma~2 in~\cite{Karg.2022}, when applying the optimal control strategy to the system described by~\eqref{eq:system} and~\eqref{eq:sysOutH} with $\bigu_\mathrm{A}=\boldsymbol{0}$, the mean $\E{\bigx_t}$ and variance $\bigOmega_{t}^{\bigx}$ can be described as follows:  
\begin{align}
    \left[\begin{array}{l}
        \mathrm{E}\left\{\boldsymbol{x}_{t+1}\right\} \\
        \mathrm{E}\left\{\hat{\boldsymbol{x}}_{\mathrm{H},t+1}\right\}
        \end{array}\right]  = & \boldsymbol{\mathcal{A}}_{\mathrm{H},t}\left[\begin{array}{l}
        \mathrm{E}\left\{\boldsymbol{x}_t\right\} \\
        \mathrm{E}\left\{\hat{\boldsymbol{x}}_{\mathrm{H},t}\right\}
        \end{array}\right], \label{eq:E_human}\\
    \mathrm{cov}
    \begin{pmatrix}
        \begin{bmatrix}
        \bigx_{t+1} \\
        \hat{\bigx}_{\mathrm{H},t+1}
        \end{bmatrix}  
    \end{pmatrix} =
    & \boldsymbol{\mathcal{A}}_{\mathrm{H},t} 
    \mathrm{cov}
    \begin{pmatrix}
        \begin{bmatrix}
            \bigx_{t} \\
            \hat{\bigx}_{\mathrm{H},t}
        \end{bmatrix}  
    \end{pmatrix}
  \boldsymbol{\mathcal{A}}_{\mathrm{H},t}^{\intercal} \notag \\
    & +\left[\begin{array}{cc}
    \bigOmega^{\boldsymbol{\xi}} & \mathbf{0} \\
    \boldsymbol{0} & \bigK_{\mathrm{H},t} \boldsymbol{\bigOmega}^{\omega} \boldsymbol{K}_{\mathrm{H},t}^{\intercal}
    \end{array}\right] \notag \\
    & +\left[\begin{array}{cc}
    \bar{\bigOmega}_t^{\hat{\bigx}_\mathrm{H}} & \mathbf{0} \\
    \mathbf{0} & \bar{\bigOmega}_t^{\bigx}
    \end{array}\right] \label{eq:cov_human}
\end{align}  
with $\mathrm{cov}(\boldsymbol{\epsilon}) = (\mathrm{cov}(\epsilon_{i}, \epsilon_{j}))_{i,j=1,...,n} $ for a multivariate random variable $\boldsymbol{\epsilon} \in \mathbb{R}^{n \times 1}$, $\boldsymbol{\Omega}_{t}^{\boldsymbol{x}}=\covs{\boldsymbol{x}_t}$ for a variable $\boldsymbol{x}$, 
$\bar{\bigOmega}^{\hat{\bigx}_{\mathrm{H}}}_t = \sum_i \bigC_i\bigL_{\mathrm{H},t}, \left(\bigOmega^{\hat{\bigx}_{\mathrm{H}}}_t + \mathrm{E}\{\hat{\bigx}_{\mathrm{H},t}\} \mathrm{E}\{\hat{\bigx}_{\mathrm{H},t}\}^\intercal \right)\bigL_{\mathrm{H},t}^{\intercal} \bigC_i^\intercal$,
$\bar{\bigOmega}^{\bigx}_t =  \sum_i \bigK_{\mathrm{H},t} \bigD_i \left(\bigOmega^{\bigx}_t + \mathrm{E}\{\bigx_t\} \mathrm{E}\{\bigx_t\}^\intercal \right)  \bigD_i^\intercal \bigK_{\mathrm{H},t}^\intercal,$
\begin{align}
    \boldsymbol{\mathcal{A}}_{\mathrm{H},t} = 
    \begin{bmatrix}
        \bigA & \! - \! \bigB_{\mathrm{H}}\bigL_{\mathrm{H},t} \\
       \bigK_{\mathrm{H},t} \bigH_{\mathrm{H}} & \bigA \! - \! \bigB_{\mathrm{H}}\bigL_{\mathrm{H},t} \! - \! \bigK_{\mathrm{H},t} \bigH_{\mathrm{H}} \\
       \end{bmatrix}
\end{align}
and initial values $\mathrm{E}\{\hat{\bigx}_{\mathrm{H},0}\}= \hat{\bigx}_{\mathrm{H},0} = \mathrm{E}\{\bigx_{0}\}$ and 
\begin{align}
    \mathrm{cov}
    \begin{pmatrix}
        \begin{bmatrix}
        \bigx_{0} \\
        \hat{\bigx}_{\mathrm{H},0}
        \end{bmatrix}  
    \end{pmatrix} =
    \left[\begin{array}{cc}
        \bigOmega_0^{\bigx} & \mathbf{0} \\
        \mathbf{0} & \mathbf{0}
        \end{array}\right].
\end{align}

\subsection{Human Variability-Respecting Optimal Control}\label{subsec:HVROC}

\subsubsection{Automation} \label{sec:automation}

As an automation we use a LQ optimal controller. 
The performance criterion for the automation is defined as: 

\begin{align}\label{eq:costAutomation}
    J_\mathrm{A} =  
    \bigx_N^\intercal \bigQ_{\mathrm{A},N}\bigx_{N} + \sum_{t=0}^{N-1} \left( \bigx_t^\intercal \bigQ_{\mathrm{A},t} \bigx_t + \bigu_{\mathrm{A},t}^\intercal \bigR_{\mathrm{A}} \bigu_{\mathrm{A},t} \right), 
\end{align}
with $\bigQ_{\mathrm{A},t} = \mathrm{diag}(\boldsymbol{q}_\mathrm{A})$ for $t = 0,...,N-1$ and $\bigR_{\mathrm{A}} = \mathrm{diag}(\boldsymbol{r}_\mathrm{A})$ and with $\boldsymbol{q}_\mathrm{A} \in \mathbb{R}^{n \times 1}$ being positive semidefinite and $\boldsymbol{r}_\mathrm{A} \in \mathbb{R}^{m_\mathrm{A} \times 1}$ being positive definite. 
This results in a control behavior that is similar to the human and aims at mimicking the single human behavior.

Eq.~\eqref{eq:costAutomation} can be augmented like in the human case, such that it allows a penalization of the deviation from a reference state $\bigp_{\mathrm{ref}}$.
Similar to the human but deterministic case, a control law and state estimation can be derived: 
\begin{align}
    \bigu_{\mathrm{A},t} = & - \bigL_{\mathrm{A},t} \hat{\bigx}_{\mathrm{A},t} \label{eq:u_aut}\\
    \hat{\bigx}_{\mathrm{A},t+1} = & \bigA \hat{\bigx}_{\mathrm{A},t} + \bigB_\mathrm{A} \bigu_{\mathrm{A},t} \notag \\ 
    &+\bigK_{\mathrm{A},t}(\bigy_{\mathrm{A},t}- \bigH_{\mathrm{A}} \hat{\bigx}_{\mathrm{A},t}). \label{eq:E_aut}
\end{align}  
These result from the iterative calculation of:
\begin{align}
    \bigL_{\mathrm{A},t} =& \Big( \bigR_{\mathrm{A}} + \bigB_{\mathrm{A}}^\intercal \bigZ_{t+1} \bigB_{\mathrm{A}} \Big)^{-1} \bigB_{\mathrm{A}}^\intercal \bigZ_{t+1} \bigA \label{eq:L_A}, \\
     \bigK_{\mathrm{A},t} =& \bigA \bigP_t \bigH_\mathrm{A}^\intercal \Big( \bigH_\mathrm{A} \bigP_t \bigH_\mathrm{A}^\intercal \Big)^{-1}, \label{eq:K_A}    
 \end{align}
with the equations for $\bigZ$ and $\bigP$ being provided by e.g.~\cite{Anderson.1989}.


The overall system architecture, picturing the interplay between the automation, system model (\ref{sec:system}) and the human model (\ref{sec:humanModel}) is depicted in Fig.~\ref{fig:system}.

\begin{figure*}[thpb]
    \centering
    \resizebox{0.9\textwidth}{!}{
          \definecolor{shadecolor}{rgb}{0.92,0.92,0.92}

\tikzstyle{block} = [rectangle, draw, text width=2cm, text centered, minimum height=1.5cm]
 \tikzstyle{box} = [rectangle, draw, fill=shadecolor, text width=9em, text centered, rounded corners, minimum height=4em]
\tikzstyle{noise} = [rectangle, draw, text width=1.5cm, text centered, minimum height=0.8cm]
\tikzstyle{noiseS} = [rectangle, draw, text width=1cm, text centered, minimum height=0.8cm]
\tikzstyle{plus} = [circle, draw, text width=0.1cm, text centered]
 \tikzstyle{frame} = [rectangle, dashed, draw, text width=9em, text centered, minimum height=4em]

\tikzstyle{line} = [draw, -latex']

\usetikzlibrary{arrows}
\begin{tikzpicture}
 
\node[block, fill=shadecolor, text width=5cm, text centered, minimum height=2cm] (system) at (3.5,1.75) {Biomechanics \\ $\boldsymbol{x}_{t+1}= \boldsymbol{A} \boldsymbol{x}_t + \boldsymbol{z}_{\mathrm{H},t} + \boldsymbol{z}_{\mathrm{A},t}$ };

\node[box,  inner xsep = 5.5cm, inner ysep=2.75cm] (boxH) at (-7.5,4.5) {};
 \node [draw=none, anchor=north west, align = right] at (boxH.north west) {Human};

\node[block,  text width=4.5cm] (estimateH) at (-9.5,5.5) {Estimation \\ $\hat{\boldsymbol{x}}_{\mathrm{H},t+1}=\boldsymbol{A} \hat{\boldsymbol{x}}_{\mathrm{H},t}+ \boldsymbol{B}_\mathrm{H} \boldsymbol{u}_{\mathrm{H},t}$ \\ $+\boldsymbol{K}_{\mathrm{H},t}\left(\boldsymbol{y}_{\mathrm{H},t}- \boldsymbol{H}_{\mathrm{H}} \hat{\boldsymbol{x}}_{\mathrm{H},t}\right)$};
\node[block, text width=3cm] (controlH) at (-12.5,3.5) {Control law \\ $\boldsymbol{u}_{\mathrm{H},t} = - \boldsymbol{L}_{\mathrm{H},t} \hat{\boldsymbol{x}}_{\mathrm{H},t} $};

\node[plus] (plusPercept) at (-5,5.5) {};
\path[line] (plusPercept) -- (estimateH)node[midway, xshift=0.25cm, yshift=0.25cm](yH){$\boldsymbol{y}_{\mathrm{H},t}$};
\path[line] (estimateH) -- (controlH.north |- estimateH)node[midway, yshift=0.25cm, xshift=-0.25cm](hatxH){$\hat{\boldsymbol{x}}_{\mathrm{H},t}$} -- (controlH);  
\path[line] (controlH) -- (estimateH.south |- controlH)node[midway, yshift=0.25cm, xshift=0cm](uH){$\boldsymbol{u}_{\mathrm{H},t}$} -- (estimateH);  

\node (aidSys) at (6.5,1.75) {};
\node[noise] (H_H) at (-3,5.5) {$\boldsymbol{H}_\mathrm{H}$};

\path[line] (system) -- (aidSys.center) -- (H_H.east -| aidSys.center) -- (H_H);%
\path[line] (H_H) -- (plusPercept);

\node[noise] (mulPercept) at (-3,6.5) {$\sum_i \epsilon_t^i \boldsymbol{D}_i $};
\node[noiseS] (addPercept) at (-5,6.5) {$\boldsymbol{\omega}_t$};

\node (aidNPIn) at (-0.75,5.5) {};
\path[line] (aidNPIn.center) -- (mulPercept.east -| aidNPIn.center) -- (mulPercept);

\path[line] (mulPercept.west) -- (plusPercept)node[yshift=0cm](plus1){$+$};
\path[line] (addPercept.south) -- (plusPercept);

\node[plus] (plusControl) at (-2.5,3.5) {};

\node[noise] (B_H) at (-5,3.5) {$\boldsymbol{B}_\mathrm{H}$};
\node[noise] (mulControl) at (-5,2.5) {$\sum_i \varepsilon_t^i \boldsymbol{C}_i $};
\node[noiseS] (addControl) at (-2.5,2.5) {$\boldsymbol{\xi}_t$};
\path[line] (controlH) -- (B_H);
\path[line] (B_H) -- (plusControl)node[yshift=0cm](plus2){$+$};
\path[line] (mulControl.east) -- (plusControl);
\path[line] (addControl.north) -- (plusControl);

\node (aidNCIn) at (-6.25,3.5) {};
\path[line] (aidNCIn.center) -- (mulControl.west -| aidNCIn.center) -- (mulControl);

\node (aidH) at (0.5,3.5) {};
\node (sysInH) at (1,2.0) {};

\path[line] (plusControl)-- (aidH.center)node[midway, xshift=1.25cm, yshift=0.25cm](zH){$\boldsymbol{z}_{\mathrm{H},t}$} -- (sysInH.west -| aidH.center) -- (sysInH);


\node[box,  inner xsep = 5.5cm, inner ysep=2.25cm] (boxA) at (-7.5,-0.75) {};
 \node [draw=none, anchor=north west, align = right] at (boxA.north west) {Automation};
  
\node[block,  text width=4.5cm] (estimateA) at (-8.5,-2) {Estimation \\ $\hat{\boldsymbol{x}}_{\mathrm{A},t+1}=\boldsymbol{A} \hat{\boldsymbol{x}}_{\mathrm{A},t}+ \boldsymbol{B}_\mathrm{A} \boldsymbol{u}_{\mathrm{A},t} $\\$+\boldsymbol{K}_{\mathrm{A},t}\left(\boldsymbol{y}_{\mathrm{A},t}- \boldsymbol{H}_{\mathrm{A}} \hat{\boldsymbol{x}}_{\mathrm{A},t}\right)$};
\node[block] (perceptA) at (-2.5,-2) {Perception $\boldsymbol{H}_\mathrm{A}$};

\node[block, text width=3.5cm] (controlA) at (-12,0) {Control law \\ $\boldsymbol{u}_{\mathrm{A},t} = - \boldsymbol{L}_{\mathrm{A},t} \hat{\boldsymbol{x}}_{\mathrm{A},t} $};

\node[block] (B_A) at (-2.5,0) {Action \\ $\boldsymbol{B}_\mathrm{A}$};

\path[line] (perceptA) -- (estimateA)node[midway, yshift=0.25cm](yA){$\boldsymbol{y}_{\mathrm{A},t}$};
\path[line] (estimateA) -- (controlA.north |- estimateA)node[midway, yshift=0.25cm](hatxA){$\hat{\boldsymbol{x}}_{\mathrm{A},t}$} -- (controlA);  
\path[line] (controlA) -- (estimateA.south |- controlA)node[midway, yshift=0.25cm](uA){$\boldsymbol{u}_{\mathrm{A},t}$} -- (estimateA);


\path[line] (system) -- (aidSys.center)node[midway, xshift=0.5cm](yA){$\boldsymbol{x}_t$} -- (perceptA.east -| aidSys.center) -- (perceptA);

\node (aidA) at (0.5,0) {};

\node (sysInA) at (1,1.5) {};

\node (sysInH) at (1,1.0) {};

\path[line] (controlA) -- (B_A);
\path[line] (B_A)-- (aidA.center)node[midway, xshift=1cm, yshift=-0.25cm](zH){$\boldsymbol{z}_{\mathrm{A},t}$} -- (sysInA.west -| aidA.center) -- (sysInA);


\node[frame,  inner xsep = 1.25cm, inner ysep=1.25cm] (framePercept) at (-3.5,5.8) {};
 \node [draw=none, anchor=south west, align = right] at (framePercept.south west) {Perception};

\node[frame,  inner xsep = 1.25cm, inner ysep=1.25cm] (frameAction) at (-3.5,3.2) {};
 \node [draw=none, anchor=north east, align = right] at (frameAction.north east) {Action};

\end{tikzpicture}
      }
      \caption{Human-Machine System Model}
      \label{fig:system}
\end{figure*}
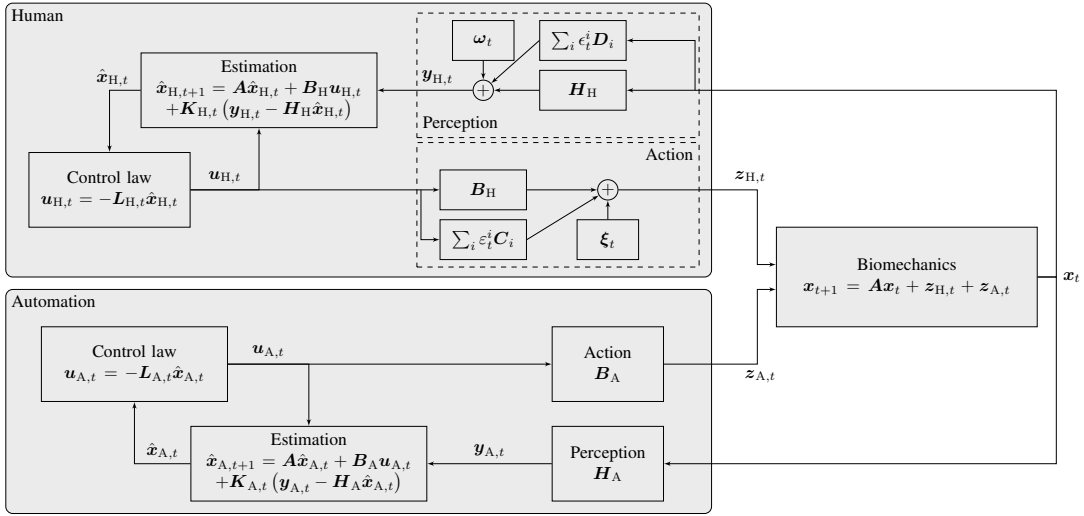



\subsubsection{Control Objectives} \label{sec:goal}
As introduced in Section~\ref{sec:introduction}, we assume that maintaining human movement characteristics, i.e. high variability in task-irrelevant areas, in a joint interaction between a human and an automation leads to an enhanced human experience. 
Additionally, we expect an improved overall performance by considering the effect of the human noise processes on the mean behavior.
Concretely, with our \textit{HVROC}  design we can find an automation that, compared to a human-only action, leads to:
\begin{enumerate}
    \item a) a similarly high or b) strongly constricted variance in task-irrelevant areas,
    \item a smaller variance in reaching the goal point,
    \item a  smaller mean error in reaching the goal point
\end{enumerate}
in a joint interaction on the system. 
These goals can be formalized to:
\begin{align} 
    \mathrm{1a)} & \quad  \bigOmega_{\nicefrac{N}{2},\mathrm{HVROC}}^{\bigp} \approxeq \bigOmega_{\nicefrac{N}{2},\mathrm{human}}^{\bigp}, \label{eq:goal1a} \\ 
    \mathrm{1b)} & \quad  \bigOmega_{\nicefrac{N}{2},\mathrm{HVROC}}^{\bigp} \ll \bigOmega_{\nicefrac{N}{2},\mathrm{human}}^{\bigp}, \label{eq:goal1b} \\
    \mathrm{2)} & \quad \bigOmega_{N,\mathrm{HVROC}}^{\bigp} \leq \bigOmega_{N,\mathrm{human}}^{\bigp}, \label{eq:goal2} \\
    \mathrm{3)} & \quad \vert \mathrm{E}\{ \boldsymbol{p}_N \}_\mathrm{HVROC} - \boldsymbol{p}_{\mathrm{ref}} \vert \leq \vert \mathrm{E}\{ \boldsymbol{p}_N \}_\mathrm{human} - \boldsymbol{p}_{\mathrm{ref}} \vert. \label{eq:goal3}
\end{align}
The suffix $\mathrm{human}$ and $\mathrm{HVROC}$ denote whether the human alone or both human and controller perform the task. 
The task-irrelevant area in our task is defined as the midway area between the starting point $\bigp_0$ and endpoint $\bigp_\mathrm{ref}$. 
Therefore we measure the variance of the task-irrelevant area at the time step $t=\nicefrac{N}{2}$.

Our work is based on the following assumption: 
\begin{assumption} \label{assumption1} 
    All parameters describing the human behavior, i.e. cost function $\bigQ_{\mathrm{H},t}, \bigR_\mathrm{H}$ and noise process parameters, are identified. Furthermore, these parameters remain unaltered when an automation additionally acts upon the control loop. 
\end{assumption}
\begin{remark}
     The parameterset described in Assumption~\ref{assumption1} can be person-specific through personalized identification (e.g. using the approach by~\cite{Karg.2024}) or be generalized over (groups of) persons.
\end{remark}

\subsubsection{Mean and Variance of the Coupled Human-Machine System}\label{sec:MeanVariance}

Based on the calculation of $ \E{\bigx_{t}} $ and $ \bigOmega_{t}^{\bigx}$  of the human movement in Subsection~\ref{sec:humanModel}, we extend the calculations to the joint interaction given in~\eqref{eq:system}: 

\begin{lemma}\label{lemma}
    Let the system dynamics be described by~\eqref{eq:system}, \eqref{eq:sysOutA} and~\eqref{eq:sysOutH}. Let the human control and filter matrices be given by~\eqref{eq:L_H} and~\eqref{eq:K_H}, let Assumption~\ref{assumption1} hold and the automation's control and filter matrices be provided by~\eqref{eq:L_A} and~\eqref{eq:K_A}. Then the joint mean $\E{\bigx_t}$ and variance $\bigOmega_{t}^{\bigx}$ are computed by:  
\begin{align}
    & \left[\begin{array}{l}
        \E{\bigx_{t+1}}\\
        \E{\hat{\bigx}_{\mathrm{H},t+1}}\\
        \E{\hat{\bigx}_{\mathrm{A},t+1}}
    \end{array} \right] = \boldsymbol{\mathcal{A}}_t 
    \left[\begin{array}{l}
        \E{\bigx_{t}}\\
        \E{\hat{\bigx}_{\mathrm{H},t}}\\
        \E{\hat{\bigx}_{\mathrm{A},t}}
    \end{array} \right], \label{eq:meanInteraction} \\
& \mathrm{cov}
\begin{pmatrix}
    \begin{bmatrix}
        \bigx_{t+1} \\
        \hat{\bigx}_{\mathrm{H},t+1}\\
        \hat{\bigx}_{\mathrm{A},t+1}
    \end{bmatrix}
\end{pmatrix}
= \bigmathA_t \; \mathrm{cov}
\begin{pmatrix}
    \begin{bmatrix}
        \bigx_{t} \\
        \hat{\bigx}_{\mathrm{H},t}\\
        \hat{\bigx}_{\mathrm{A},t}
    \end{bmatrix}
\end{pmatrix} \bigmathA_t^{\intercal} \notag \\
&+
\begin{bmatrix}
   \bigOmega^{\bigxi} & \boldsymbol{0} & \boldsymbol{0} \\
    \boldsymbol{0} &\bigK_{\mathrm{H},t}\bigOmega^{\bigomega_{\mathrm{H}}}\bigK_{\mathrm{H},t}^{\intercal} & \boldsymbol{0} \\
    \boldsymbol{0} & \boldsymbol{0} & \boldsymbol{0} \\
\end{bmatrix} \notag \\
&+
\begin{bmatrix}
    \bar{\bigOmega}^{\hat{\bigx}_{\mathrm{H}}}_t & \boldsymbol{0} & \boldsymbol{0} \\
    \boldsymbol{0} & \bar{\bigOmega}^{\bigx}_t & \boldsymbol{0} \\
    \boldsymbol{0} & \boldsymbol{0} & \boldsymbol{0} \label{eq:varianceInteraction} \\
\end{bmatrix}
\end{align}
with 

\begin{align}
    \begin{scriptsize}
    \bigmathA_t \! = \! 
    \begin{+bmatrix}
    \bigA & \! - \! \bigB_{\mathrm{H}}\bigL_{\mathrm{H},t} & \! - \! \bigB_{\mathrm{A}}\bigL_{\mathrm{A},t} \\
   \bigK_{\mathrm{H},t} \bigH_{\mathrm{H}} & \bigA \! - \! \bigB_{\mathrm{H}}\bigL_{\mathrm{H},t} \! - \! \bigK_{\mathrm{H},t} \bigH_{\mathrm{H}} & \boldsymbol{0} \\
   \bigK_{\mathrm{A},t} \bigH_{\mathrm{A}} & \boldsymbol{0} & \bigA \! -\! \bigB_{\mathrm{A}}\bigL_{\mathrm{A},t} \! -\! \bigK_{\mathrm{A},t} \bigH_{\mathrm{A}}\\
   \end{+bmatrix}\!, 
\end{scriptsize} \label{eq:A_t}
\end{align}

\begin{align}
    \bar{\bigOmega}^{\hat{\bigx}_{\mathrm{H}}}_t = &\sum_i \bigC_i\bigL_{\mathrm{H},t} \notag \\ 
    &\Big(\bigOmega^{\hat{\bigx}_{\mathrm{H}}}_t 
    + \mathrm{E}\{\hat{\bigx}_{\mathrm{H},t}\} \mathrm{E}\{\hat{\bigx}_{\mathrm{H},t}\}^\intercal \Big)\bigL_{\mathrm{H},t}^{\intercal} \bigC_i^\intercal , \\
\bar{\bigOmega}^{\bigx}_t = & \sum_i \bigK_{\mathrm{H},t} \bigD_i \Big(\bigOmega^{\bigx}_t + \mathrm{E}\{\bigx_t\} \mathrm{E}\{\bigx_t\}^\intercal \Big)  \bigD_i^\intercal \bigK_{\mathrm{H},t}^\intercal . 
\end{align}
\end{lemma}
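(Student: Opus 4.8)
The plan is to stack the three quantities into a single augmented state $\boldsymbol{z}_t = [\bigx_t^\intercal,\ \hat{\bigx}_{\mathrm{H},t}^\intercal,\ \hat{\bigx}_{\mathrm{A},t}^\intercal]^\intercal$ and to rewrite the closed loop as a linear recursion $\boldsymbol{z}_{t+1} = \bigmathA_t \boldsymbol{z}_t + \boldsymbol{n}_t$ with a zero-mean noise vector $\boldsymbol{n}_t$. First I would substitute the two control laws $\bigu_{\mathrm{H},t}=-\bigL_{\mathrm{H},t}\hat{\bigx}_{\mathrm{H},t}$ and $\bigu_{\mathrm{A},t}=-\bigL_{\mathrm{A},t}\hat{\bigx}_{\mathrm{A},t}$ into the plant~\eqref{eq:system}, into the human Kalman filter, and into the automation estimator~\eqref{eq:E_aut}. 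Collecting the deterministic coefficients of $\bigx_t$, $\hat{\bigx}_{\mathrm{H},t}$ and $\hat{\bigx}_{\mathrm{A},t}$ directly yields the block matrix $\bigmathA_t$ of~\eqref{eq:A_t}; in particular, the two zero blocks appear because the human estimator never sees $\hat{\bigx}_{\mathrm{A},t}$ and the automation estimator never sees $\hat{\bigx}_{\mathrm{H},t}$. The residual noise vector collects, in its first block, $\bigxi_t-\sum_i\varepsilon_t^{(i)}\bigC_i\bigL_{\mathrm{H},t}\hat{\bigx}_{\mathrm{H},t}$, in its second block $\bigK_{\mathrm{H},t}\bigomega_t+\bigK_{\mathrm{H},t}\sum_i\epsilon_t^{(i)}\bigD_i\bigx_t$, and a zero third block, since the automation output~\eqref{eq:sysOutA} is noise-free.

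For the mean~\eqref{eq:meanInteraction} I would take expectations of the recursion. All additive processes are zero-mean, and each multiplicative term factorizes as, e.g., $\E{\varepsilon_t^{(i)}\bigC_i\bigL_{\mathrm{H},t}\hat{\bigx}_{\mathrm{H},t}}=\E{\varepsilon_t^{(i)}}\,\bigC_i\bigL_{\mathrm{H},t}\E{\hat{\bigx}_{\mathrm{H},t}}=\boldsymbol{0}$ by independence of $\bigvarepsilon$ from the state together with $\E{\bigvarepsilon}=\boldsymbol{0}$. Hence $\E{\boldsymbol{n}_t}=\boldsymbol{0}$ and $\E{\boldsymbol{z}_{t+1}}=\bigmathA_t\E{\boldsymbol{z}_t}$, which is~\eqref{eq:meanInteraction}.

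For the variance I would expand $\covs{\boldsymbol{z}_{t+1}}=\bigmathA_t\covs{\boldsymbol{z}_t}\bigmathA_t^\intercal+\cov{\bigmathA_t\boldsymbol{z}_t}{\boldsymbol{n}_t}+\cov{\boldsymbol{n}_t}{\bigmathA_t\boldsymbol{z}_t}+\covs{\boldsymbol{n}_t}$ and argue that the two cross terms vanish: every additive noise is zero-mean and independent of $\boldsymbol{z}_t$, and every multiplicative term contains a scalar factor ($\varepsilon_t^{(i)}$ or $\epsilon_t^{(i)}$) that is zero-mean and independent of $\boldsymbol{z}_t$, so the corresponding expectation factorizes to zero. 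It then remains to evaluate $\covs{\boldsymbol{n}_t}$ block by block. The additive contributions give $\bigOmega^{\bigxi}$ and $\bigK_{\mathrm{H},t}\bigOmega^{\bigomega}\bigK_{\mathrm{H},t}^\intercal$. For the multiplicative contributions I would use $\E{\varepsilon_t^{(i)}\varepsilon_t^{(j)}}=\delta_{ij}$ together with the second-moment identity $\E{\hat{\bigx}_{\mathrm{H},t}\hat{\bigx}_{\mathrm{H},t}^\intercal}=\bigOmega^{\hat{\bigx}_{\mathrm{H}}}_t+\E{\hat{\bigx}_{\mathrm{H},t}}\E{\hat{\bigx}_{\mathrm{H},t}}^\intercal$ (and the analogue for $\bigx_t$), which reproduces exactly $\bar{\bigOmega}^{\hat{\bigx}_{\mathrm{H}}}_t$ and $\bar{\bigOmega}^{\bigx}_t$. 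Finally, the off-diagonal blocks of $\covs{\boldsymbol{n}_t}$ vanish because the processes $\bigxi_t,\bigomega_t,\bigvarepsilon,\bigepsilon$ are mutually independent and zero-mean, leaving precisely the block-diagonal structure of~\eqref{eq:varianceInteraction}; the initialization follows from $\hat{\bigx}_{\mathrm{H},0}=\hat{\bigx}_{\mathrm{A},0}=\E{\bigx_0}$.

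I expect the main obstacle to be the careful bookkeeping of the cross-covariances rather than any deep step: one must verify that coupling the human and the automation through the shared plant state $\bigx_t$ does not generate any new off-block-diagonal noise covariance beyond those already present in the human-only case of Lemma~2 in~\cite{Karg.2022}. The structural fact that makes the third block inert is that $\bigy_{\mathrm{A},t}$ carries no measurement noise, so $\bigK_{\mathrm{A},t}$ injects nothing stochastic; the remaining effort is simply to confirm that the mutual independence of the four noise families keeps all mixed expectations zero, so that the result is a clean block extension of the two-block human-only recursion~\eqref{eq:E_human}--\eqref{eq:cov_human}.
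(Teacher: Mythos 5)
Your proposal is correct and follows essentially the same route as the paper's proof: both substitute the two control laws into the plant and the two estimators to obtain the augmented linear recursion with system matrix $\bigmathA_t$, kill all cross terms via zero-mean and mutual independence of the noise processes, reduce the double sums of signal-dependent noise with $\E{\varepsilon_t^{(i)}\varepsilon_t^{(j)}}=\delta_{ij}$ and $\E{\varepsilon_t^{(i)}\epsilon_t^{(j)}}=0$, and apply the second-moment identity $\E{\hat{\bigx}_{\mathrm{H},t}\hat{\bigx}_{\mathrm{H},t}^{\intercal}}=\bigOmega_t^{\hat{\bigx}_{\mathrm{H}}}+\E{\hat{\bigx}_{\mathrm{H},t}}\E{\hat{\bigx}_{\mathrm{H},t}}^{\intercal}$ (and its analogue for $\bigx_t$) to obtain the block-diagonal correction terms. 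The only cosmetic difference is that you argue the third noise block is identically zero because $\bigy_{\mathrm{A},t}$ is noise-free, whereas the paper formally carries an automation observation-noise term $\bigK_{\mathrm{A},t}\bigomega_{\mathrm{A},t}$ and then lets it vanish.
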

\begin{proof}
    See Appendix. 
\end{proof}
These descriptions provide recursive formulas to compute the mean $\E{\bigx_{t}}$ and variance $\bigOmega_{t}^{\bigx}$ of the system state that gets acted upon by a stochastic human and a deterministic automation.

\subsubsection{Optimization-based Automation Parametrization} \label{sec:parametertuning}

Based on Lemma~\ref{lemma}, a controller can be parametrized which explicitly considers its effect on the mean and variance of the coupled human and automation. 

With the previously established foundation, we now introduce our procedure, depicted in Fig.~\ref{fig:procedure}. 
On the basis of identified human parameters, the human-alone mean and variance for a specific movement can be calculated as described in Subsection~\ref{sec:humanModel}. 
The resulting movement is regarded as the natural human behavior.

Using the goals defined in Subsection~\ref{sec:goal} we formulate a parameter optimization problem allowing to shape the resulting mean and covariance of the joint interaction compared to the human-alone action: 

\begin{proposition}
    Let a human-machine system be defined by~\eqref{eq:system}. Let Assumption~\ref{assumption1} hold. Find the optimal automation's cost function parameters $\boldsymbol{q}_{\mathrm{A}}^*$ and $\boldsymbol{r}_{\mathrm{A}}^*$ that result in a control law which minimizes the objective function
\begin{align}
    \min_{\boldsymbol{q}_{\mathrm{A}}, \boldsymbol{r}_{\mathrm{A}}} \; & J_\mathrm{HVROC}(\boldsymbol{q}_{\mathrm{A}}, \boldsymbol{r}_{\mathrm{A}}) = \notag \\
    &s_\mathrm{highMidVar} \left( \bigOmega_{\nicefrac{N}{2},\mathrm{HVROC}}^{\bigp} - \bigOmega_{\nicefrac{N}{2},\mathrm{human}}^{\bigp} \right)^2 \notag \\
    &+s_\mathrm{lowMidVar} \left( \frac{\bigOmega_{\nicefrac{N}{2},\mathrm{HVROC}}^{\bigp}}{\bigOmega_{\nicefrac{N}{2},\mathrm{human}}^{\bigp}} \right)^2  \notag \\ 
    &+s_\mathrm{endVar} \left( \frac{\bigOmega_{N,\mathrm{HVROC}}^{\bigp}}{\bigOmega_{N,\mathrm{human}}^{\bigp}} \right)^2  \notag \\ 
    &+s_\mathrm{ref}\left( \frac{\vert \mathrm{E}\{ \boldsymbol{p}_N \}_\mathrm{HVROC} - \bigp_{\mathrm{ref}} \vert}{\vert \mathrm{E}\{ \boldsymbol{p}_N \}_\mathrm{human} - \bigp_{\mathrm{ref}} \vert} \right)^2  \label{eq:costHVROC} \\
    \mathrm{s.t.} \; &\boldsymbol{q}_{\mathrm{A}} \geq \boldsymbol{0}, \boldsymbol{r}_{\mathrm{A}} > \boldsymbol{0}. \notag
\end{align}
\end{proposition}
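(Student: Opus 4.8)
The plan is to read this Proposition as a well-posedness statement: the assertion that there exist optimal parameters $\boldsymbol{q}_{\mathrm{A}}^*, \boldsymbol{r}_{\mathrm{A}}^*$ minimizing $J_\mathrm{HVROC}$ is meaningful only once one shows that $J_\mathrm{HVROC}$ is a genuine, evaluable function of the decision variables over the feasible set. So I would first establish the dependency chain that turns $(\boldsymbol{q}_{\mathrm{A}}, \boldsymbol{r}_{\mathrm{A}})$ into the quantities appearing in \eqref{eq:costHVROC}, then argue feasibility and continuity, and finally connect each summand to the control objectives \eqref{eq:goal1a}--\eqref{eq:goal3}.

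First I would fix an arbitrary feasible pair $(\boldsymbol{q}_{\mathrm{A}}, \boldsymbol{r}_{\mathrm{A}})$ with $\boldsymbol{q}_{\mathrm{A}} \geq \boldsymbol{0}$ and $\boldsymbol{r}_{\mathrm{A}} > \boldsymbol{0}$ and form $\bigQ_{\mathrm{A},t} = \mathrm{diag}(\boldsymbol{q}_{\mathrm{A}})$ and $\bigR_{\mathrm{A}} = \mathrm{diag}(\boldsymbol{r}_{\mathrm{A}})$. Because $\bigR_{\mathrm{A}} \succ 0$, the matrix inverse in \eqref{eq:L_A} exists for every $t$, so the backward recursion for $\bigZ$ and hence $\bigL_{\mathrm{A},t}$ is well-defined, and likewise the filter recursion \eqref{eq:K_A} yields $\bigK_{\mathrm{A},t}$. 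Feeding these matrices together with the (parameter-independent) human matrices $\bigL_{\mathrm{H},t}, \bigK_{\mathrm{H},t}$ into Lemma~\ref{lemma} produces the closed recursions \eqref{eq:meanInteraction}--\eqref{eq:varianceInteraction} for the joint mean and covariance. Reading off the position-marginal blocks of the covariance at $t = \nicefrac{N}{2}$ and $t = N$ gives $\bigOmega_{\nicefrac{N}{2},\mathrm{HVROC}}^{\bigp}$ and $\bigOmega_{N,\mathrm{HVROC}}^{\bigp}$, and the corresponding block of \eqref{eq:meanInteraction} gives $\mathrm{E}\{\boldsymbol{p}_N\}_\mathrm{HVROC}$. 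The human-alone references $\bigOmega_{\nicefrac{N}{2},\mathrm{human}}^{\bigp}$, $\bigOmega_{N,\mathrm{human}}^{\bigp}$ and $\mathrm{E}\{\boldsymbol{p}_N\}_\mathrm{human}$ follow from \eqref{eq:E_human}--\eqref{eq:cov_human} and, under Assumption~\ref{assumption1}, do not depend on $(\boldsymbol{q}_{\mathrm{A}}, \boldsymbol{r}_{\mathrm{A}})$; they therefore enter \eqref{eq:costHVROC} as fixed constants. This establishes $J_\mathrm{HVROC}$ as a well-defined map on the feasible set.

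Next I would verify that the optimization is well-posed. Feasibility is immediate since the constraint set $\{\boldsymbol{q}_{\mathrm{A}} \geq \boldsymbol{0},\, \boldsymbol{r}_{\mathrm{A}} > \boldsymbol{0}\}$ is nonempty. Continuity of $J_\mathrm{HVROC}$ follows because every map in the chain---the diagonal embedding, the finite Riccati and Kalman recursions (rational in the entries, with invertibility guaranteed by $\bigR_{\mathrm{A}} \succ 0$), and the finite mean/covariance recursions of Lemma~\ref{lemma}---is continuous, and the summands of \eqref{eq:costHVROC} are continuous compositions thereof. I would then interpret each term: the first two penalize deviation of the midway variance from (respectively, relative to) the human value, realizing the mutually exclusive goals \eqref{eq:goal1a} and \eqref{eq:goal1b} via the selector weights $s_\mathrm{highMidVar}$ and $s_\mathrm{lowMidVar}$; the third drives the goal-point variance ratio toward goal \eqref{eq:goal2}; and the fourth drives the normalized mean goal-point error toward goal \eqref{eq:goal3}. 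Hence any minimizer trades these objectives off in the weighted least-squares sense encoded by the $s_{\cdot}$.

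The main obstacle is that this argument yields only well-posedness and computability, not the existence of a global minimizer or any convexity. The dependence of $\bigOmega^{\bigp}$ on $(\boldsymbol{q}_{\mathrm{A}}, \boldsymbol{r}_{\mathrm{A}})$ is strongly nonlinear---it passes through a matrix Riccati recursion and the quadratic covariance recursion of Lemma~\ref{lemma}, in which the signal-dependent-noise terms $\bar{\bigOmega}^{\hat{\bigx}_{\mathrm{H}}}_t$ and $\bar{\bigOmega}^{\bigx}_t$ couple mean and variance---so $J_\mathrm{HVROC}$ is generically non-convex and may admit several local minima. A fully rigorous existence proof would require a coercivity or compact-sublevel-set property, which is delicate because both $\boldsymbol{r}_{\mathrm{A}} \to \boldsymbol{0}$ and $\boldsymbol{r}_{\mathrm{A}} \to \infty$ are admissible limits of the open constraint. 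I would therefore present the Proposition as defining a numerically solvable (e.g.\ gradient-based or derivative-free) program; if a clean existence statement is desired, I would restrict the search to a compact box bounded away from $\boldsymbol{r}_{\mathrm{A}} = \boldsymbol{0}$, say $\delta \boldsymbol{1} \le \boldsymbol{r}_{\mathrm{A}} \le M \boldsymbol{1}$ and $\boldsymbol{0} \le \boldsymbol{q}_{\mathrm{A}} \le M \boldsymbol{1}$, on which continuity together with compactness yields a minimizer by the Weierstrass theorem.
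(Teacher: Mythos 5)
Your proposal is correct in substance and, where it overlaps with the paper, follows the same route: the paper offers no formal proof of this Proposition at all, because it is a problem formulation rather than a theorem --- the paper simply states it and then describes a bi-level solution procedure (upper level: evaluate $J_\mathrm{HVROC}$ for given $\boldsymbol{q}_{\mathrm{A}}, \boldsymbol{r}_{\mathrm{A}}$; lower level: compute \eqref{eq:L_A}, \eqref{eq:K_A} and feed them into Lemma~\ref{lemma} to obtain the joint mean and covariance). Your dependency chain $(\boldsymbol{q}_{\mathrm{A}}, \boldsymbol{r}_{\mathrm{A}}) \mapsto (\bigL_{\mathrm{A},t}, \bigK_{\mathrm{A},t}) \mapsto$ Lemma~\ref{lemma} $\mapsto$ objective terms is exactly that bi-level structure, and your observation that the human-alone quantities are fixed constants under Assumption~\ref{assumption1} matches the paper's procedure, where \eqref{eq:E_human}--\eqref{eq:cov_human} are evaluated once, outside the optimization loop. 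What you add --- and the paper does not attempt --- is the well-posedness layer: invertibility from $\bigR_{\mathrm{A}} \succ 0$, continuity of the finite recursions, the mapping of each summand to goals \eqref{eq:goal1a}--\eqref{eq:goal3}, and, most notably, the honest caveat that existence of a global minimizer is not guaranteed on the open, unbounded feasible set and that the problem is generically non-convex (the covariance recursion couples mean and variance through $\bar{\bigOmega}^{\hat{\bigx}_{\mathrm{H}}}_t$ and $\bar{\bigOmega}^{\bigx}_t$). That gap is real but belongs to the paper, not to you; your suggested fix (a compact box $\delta \boldsymbol{1} \le \boldsymbol{r}_{\mathrm{A}} \le M \boldsymbol{1}$, $\boldsymbol{0} \le \boldsymbol{q}_{\mathrm{A}} \le M \boldsymbol{1}$ plus Weierstrass) is the standard and appropriate repair, and is consistent with the paper's implicit stance of treating the Proposition as a numerically solved design problem rather than an existence claim.
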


\begin{remark}
The four weights $\boldsymbol{s} = \begin{bmatrix} s_\mathrm{highMidVar} & s_\mathrm{lowMidVar} &  s_\mathrm{endVar} & s_\mathrm{ref} \end{bmatrix}$ allow to balance between the four goals given by~\eqref{eq:goal1a}-\eqref{eq:goal3}. 
Only either $s_\mathrm{highMidVar}$ or $s_\mathrm{lowMidVar}$ should be chosen to be non-zero, depending on whether a high or low variability in the task-irrelevant area between goal points is desired. 
\end{remark}
This proposition is solved by a bi-level approach as depicted in Fig.~\ref{fig:procedure}: On the upper level the objective function is evaluated on basis of the mean and variance of the coupled system for a given set of $\boldsymbol{q}_\mathrm{A}$ and $\boldsymbol{r}_\mathrm{A}$, which in turn is calculated by the lower level. 
This results in the optimal parameters $\boldsymbol{q}^*_\mathrm{A}, \boldsymbol{r}^*_\mathrm{A}$, from which the optimal control matrix $\bigL^*_{\mathrm{A},t}$ can be calculated.

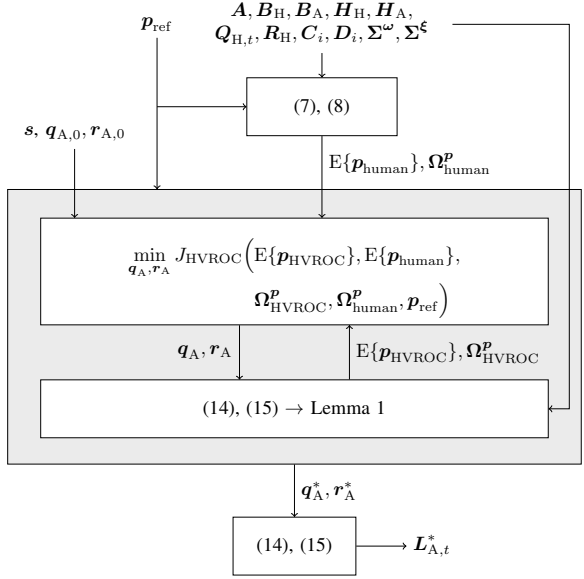
\begin{figure}[thpb]
    \centering
    \resizebox{0.48\textwidth}{!}{
\definecolor{gray1}{RGB}{242,242,242}
\definecolor{gray2}{RGB}{180,180,180}
 \definecolor{shadecolor}{rgb}{0.92,0.92,0.92}

\usetikzlibrary{arrows}

\tikzstyle{block} = [rectangle, draw, fill=white,
   text width=2cm, text centered, minimum height=3em]
\tikzstyle{box} = [rectangle, fill = shadecolor, draw, text width=24em, text centered,  minimum height=8em] 
\tikzstyle{line} = [draw, ->] 

\begin{tikzpicture}[node distance = 2cm, auto]
    
  	\node [text width = 4.5cm, align=center ] (ident) at (0.5,9.5) { $ \boldsymbol{A}, \boldsymbol{B}_\mathrm{H},  \boldsymbol{B}_\mathrm{A},  \boldsymbol{H}_\mathrm{H},  \boldsymbol{H}_\mathrm{A},$ \\ 
  	$\boldsymbol{Q}_{\mathrm{H},t}, \boldsymbol{R}_{\mathrm{H}}, 
  	 \boldsymbol{C}_i, \boldsymbol{D}_i, \boldsymbol{\Sigma}^{\boldsymbol{\omega}}, \boldsymbol{\Sigma}^{\boldsymbol{\xi}}  $}; 
   	\node [block, text width = 2.5cm] (meanHuman) at (0.5,8) { \eqref{eq:E_human}, \eqref{eq:cov_human}};
   	
   	\node [box, inner xsep = 1cm, inner ysep=2.5cm, anchor=north] (box) at (0,6.5) {};

   	\node [block, text width = 9cm] (calcTop) at (0,5) { \begin{align}    	
   	\min_{\boldsymbol{q}_\mathrm{A}, \boldsymbol{r}_\mathrm{A}} J_\mathrm{HVROC} \Big( 
   	& \mathrm{E} \{ \boldsymbol{p}_\mathrm{HVROC} \},  \mathrm{E} \{ \boldsymbol{p}_\mathrm{human} \}, \notag \\
   	& \boldsymbol{\Omega}_{\mathrm{HVROC}}^{\boldsymbol{p}}, \boldsymbol{\Omega}_{\mathrm{human}}^{\boldsymbol{p}}, \boldsymbol{p}_\mathrm{ref}    \Big)   \notag
   	\end{align}};

	\node [block, text width = 9cm] (calcBot) at (0,2.5) { \eqref{eq:L_A}, \eqref{eq:K_A} $\rightarrow $ Lemma~\ref{lemma}}; 

	\node[xshift=-1cm] (anchorTL) at (calcTop.south) {} ;	
	\node[xshift=-1cm] (anchorBL) at (calcBot.north) {};	
	\path[line] (anchorTL.center) -- (anchorBL.center)node[midway, left](q-r) {$\boldsymbol{q}_{\mathrm{A}}, \boldsymbol{r}_{\mathrm{A}} $};

	\node[xshift=1cm] (anchorTR) at (calcTop.south) {};	
	\node[xshift=1cm] (anchorBR) at (calcBot.north) {};	
	\path[line] (anchorBR.center) -- (anchorTR.center)node[midway,  right](q-r) {$ \mathrm{E} \{ \boldsymbol{p}_\mathrm{HVROC} \}, \boldsymbol{\Omega}_{\mathrm{HVROC}}^{\boldsymbol{p}} $}; 

	\node (p_ref) at (-2.5,9.5) {$\boldsymbol{p}_\mathrm{ref}$};
	\node (input) at (-4,7.5) {$\boldsymbol{s}$, $ \boldsymbol{q}_{\mathrm{A},0}, \boldsymbol{r}_{\mathrm{A},0}$
	}; 
	
	\path[line] (input) -- (calcTop.north -| input) {};
	\path[line] (p_ref) -- (box.north -| p_ref) {};
	\path[line]  (meanHuman -| p_ref) -- (meanHuman) {};
	
	\node[block] (output) at (0,0) { \eqref{eq:L_A}, \eqref{eq:K_A} }; %
	
	\path[line] (ident) -- (meanHuman);
	\path[line] (meanHuman) -- (calcTop.north -| meanHuman)node[midway, yshift=0.25cm,](identHuman){$\mathrm{E} \{ \boldsymbol{p}_\mathrm{human} \}, \boldsymbol{\Omega}_{\mathrm{human}}^{\boldsymbol{p}} $};
	\path[line] (box) -- (output)node[midway](out) {$\boldsymbol{q}_{\mathrm{A}}^*, \boldsymbol{r}_{\mathrm{A}}^* $};
		
	\node (inputHelp) at (5,9.5) {}; 	
	\path[line] (ident) -- (inputHelp.center) -- (inputHelp |- calcBot.east) -- (calcBot.east);
	
	\node (L_out) at (2.5,0) {$\boldsymbol{L}^*_{\mathrm{A},t}$};
	\path[line] (output) -- (L_out);
	
\end{tikzpicture}
      }
      \caption{Procedure depicting the bi-level optimization structure.}
      \label{fig:procedure}
\end{figure}

\section{Numerical Example} \label{sec:implementation}

We implement our proposed method on two examples, both tackling a two-dimensional P2P movement to be performed jointly by a human and an automation. The first example is inspired by~\cite{Todorov.2002, Karg.2022} and deals with the short-distance manipulation of a human hand. In the second case we simulate a handheld tool which is grasped by the human hand and which is to be moved over a longer distance. 

\subsection{Simulation Parametrization}
The system state is defined as  $\bigx = \begin{bmatrix} 
    p_x & p_y & \Dot{p_x} & \Dot{p_y} & f_{x} & f_{y} & g_{x} & g_{y}
\end{bmatrix}^\intercal $. 
Both dimensions can be described analogously, we therefore only describe the $x$ dimension from here on. Analogously to~\cite{Todorov.2005}, $p_x$ and $\Dot{p}_x$ denote the position and velocity of the human hand, which by itself is modeled as a point mass of $m_\mathrm{hand} = \SI{1}{kg}$, optionally grasping a tool which adds a tool-mass of $m_\mathrm{tool}$ to the total point-mass $m = m_\mathrm{hand} + m_\mathrm{tool}$. $f_x$ describes the force exerted on the hand, which results from the automation's input $u_{\mathrm{A},x}$ and human muscle force on the hand resulting from a second-order linear filter $g_x$ with the human neural activation $u_{\mathrm{H},x}$ as input. The human filter time constants are chosen as $\tau_1 = \tau_2 = \SI{40}{ms}$. With a time discretization of $\Delta t = \SI{10}{ms}$, we receive dynamic and filter equations:
\begin{subequations}
\begin{align}
    p_{x,t+1}&= p_{x,t} + \Delta t \; \Dot{p}_{x,t}, \label{eq:sys_p}\\
    \Dot{p}_{x,t+1}&=\Dot{p}_{x,t} + \frac{\Delta t}{m} f_{x,t}, \label{eq:sys_v} \\
     f_{x,t+1} &= \left( 1 - \frac{\Delta t}{\tau_2} \right) f_{x,t} + \frac{\Delta t}{\tau_2} g_{x,t} + u_{\mathrm{A},x,t}, \label{eq:sys_f}\\
      g_{x,t+1} &= \left( 1 - \frac{\Delta t}{\tau_1} \right) g_{x,t} + \frac{\Delta t}{\tau_1} u_{\mathrm{H},x,t}. \label{eq:sys_g}
\end{align}
\end{subequations}
The equations for the  $y$-dimension can be described analogously. 
Using~\eqref{eq:sys_p}-\eqref{eq:sys_g}, the system matrices $\bigA$, $\bigB_\mathrm{A}$ and $\bigB_\mathrm{H}$ can be derived. 
Both the human and the automation output equations provide insight into the first six system states: $\bigH_\mathrm{H} = \bigH_\mathrm{A} = \begin{bmatrix} \boldsymbol{I}_{ 6 \times 6} & \boldsymbol{0}_{6 \times 2} \end{bmatrix}$. 

We set the human cost function parameters to $\bigQ_{\mathrm{H},N} = \mathrm{diag}(\begin{bmatrix} 1 & 1 & 0.04 & 0.04 & 0.0004 & 0.0004 & 0 & 0 \end{bmatrix}^\intercal)$ and $\bigR_{\mathrm{H}} = \mathrm{diag}(\begin{bmatrix}  0.000005 & 0.000005 \end{bmatrix}^\intercal)$ in close analogy to \cite{Karg.2022}. 

The scaling matrices for the additive noise processes are chosen as $\boldsymbol{\Sigma}^{\bigomega} = \mathrm{diag} \left( \begin{bmatrix} 0.02 & 0.02 & 0.2 & 0.2 & 1 & 1 \end{bmatrix}^\intercal \right)$ and $\boldsymbol{\Sigma}^{\bigxi} = \boldsymbol{0}$. Furthermore, we define the scaling matrices for the signal-dependent noise processes as $\bigC_1 = \sigma^{\bigu} \bigB$, $\bigC_2 = \sigma^{\bigu} \bigB \begin{bmatrix} -\bigpsi_2 & \bigpsi_1 \end{bmatrix}$ with $\sigma^{\bigu}=0.5$ and $\bigD = \boldsymbol{0}$, with $\bigpsi$ denoting the standard unit vector of $\mathbb{R}^2$.  

The task is to move the human hand from the starting point $p_{x,0} = p_{y,0} = \SI{0}{m}$ to the end point $\bigp_{\mathrm{ref}}$ within $N$ timesteps. For the human hand example we set $p_{x,\mathrm{ref}} = p_{y,\mathrm{ref}} = \SI{0.1}{m}$, $m = m_\mathrm{hand} = \SI{1}{kg}$ and $N=42$. Within the tool manipulation example, the goal point $p_{x,\mathrm{ref}} = p_{y,\mathrm{ref}} = \SI{0.5}{m}$ is to be reached within $N=96$ timesteps with a mass of $m = m_\mathrm{hand} + m_\mathrm{tool} = \SI{10}{kg}$. 


The weights of our introduced cost function~\eqref{eq:costHVROC} allow to influence the shape of the resulting behavior regarding mean accuracy and variability in the goal point as well as variability midway between start and end point.   
In our example we aim at balancing the priorization between all three goals, while either shaping the joint variance to be similar to the human-alone variance or low at $t = \nicefrac{N}{2}$:  $\boldsymbol{s}_\mathrm{high var} = \begin{bmatrix} 1 & 1 & 10 & 0 \end{bmatrix}^\intercal$ and $\boldsymbol{s}_\mathrm{low var} = \begin{bmatrix} 1 & 1 & 0 & 1 \end{bmatrix}^\intercal$. 

We optimize for the first six weights of $\bigQ_\mathrm{A}$, thus for $\boldsymbol{q}(1:6)$, as the automation has no controllability over the neural activation of the system. 
The optimization algorithm starts with initial parameters that are adopted from the human model: $\boldsymbol{q}_0= \begin{bmatrix} 1 & 1 & 0.04 & 0.04 & 0.0004 & 0.0004 \end{bmatrix}^\intercal$ and $\boldsymbol{r}_0 = \begin{bmatrix} 0.000005 & 0.000005 \end{bmatrix}^\intercal $. 

\subsection{Results}

In this subsection we present the simulation results of our proposed \textit{HVROC} in two parametrizations (aiming for either a) a variance that is  similar to the \textit{human}-alone or b) a reduced variance for $t=\nicefrac{N}{2}$) and a linear-quadratic riccati (\textit{LQR}) controller that is parameterized similar to the human as a benchmark example, both in a closed-loop interaction with a simulated human. The simulated \textit{human}-alone behavior is shown for comparison. We look at two cases: the former deals with jointly manipulating a human hand and the latter with the manipulation of a handheld tool. 
 
The \textit{LQR} benchmark controller obtains a state estimation through a Kalman filter. 
The cost function parameters are chosen similarly to the human, with $\bigQ_{\mathrm{LQR}} = \mathrm{diag}(\begin{bmatrix} 1 & 1 & 0.04 & 0.04 & 0.0004 & 0.0004 & 0 & 0 \end{bmatrix}^\intercal)$ for $t= 1 \dots N$ and $\bigR_{\mathrm{LQR}} = \mathrm{diag}(\begin{bmatrix}  0.002 & 0.002 \end{bmatrix}^\intercal)$. 
The resulting joint behavior of the \textit{LQR} together with the simulated human is depicted in black in Fig.~\ref{fig:sim}.


\begin{figure}[h]
    \centering
    \begin{subfigure}[b]{0.475\textwidth}
        \centering
    \resizebox{3.5in}{!}{
%
%
\begin{tikzpicture}

\begin{axis}[%
width=3.044in,
height=3.044in,
at={(0.016in,0.406in)},
scale only axis,
xmin=0,
xmax=42,
xtick={0,10,20,30,40},
xlabel style={font=\color{white!15!black}},
xlabel={Time Steps $t$},
ymin=0,
ymax=0.11,
ytick={  0, 0.1},
ylabel style={font=\color{white!15!black}},
ylabel={E\{$ p_x$\} [$\si{m}$]},
axis background/.style={fill=white},
legend style={at={(0.97,0.03)}, anchor=south east, legend cell align=left, align=left, draw=white!15!black}
]
\addplot [color=red, line width=1.5pt]
  table[row sep=crcr]{%
1	0\\
2	0\\
3	0\\
4	0\\
5	5.59211212138388e-05\\
6	0.000246495726334693\\
7	0.000653236759819613\\
8	0.00134904344707812\\
9	0.00239238534491977\\
10	0.00382499929177087\\
11	0.00567172205704975\\
12	0.00794155321378419\\
13	0.0106293632403218\\
14	0.0137178779703596\\
15	0.017179714989012\\
16	0.0209793423757342\\
17	0.0250748868223482\\
18	0.0294197439155017\\
19	0.0339639642073042\\
20	0.0386554169555779\\
21	0.0434407492389395\\
22	0.0482661690064786\\
23	0.0530780948080908\\
24	0.057823730668681\\
25	0.0624516397550478\\
26	0.0669124010023579\\
27	0.0711594302083726\\
28	0.0751500210473039\\
29	0.0788466058842289\\
30	0.0822181570237544\\
31	0.085241565942425\\
32	0.0879027738509271\\
33	0.0901974335489929\\
34	0.0921310091522648\\
35	0.0937183331598806\\
36	0.0949827335805147\\
37	0.0959550972065412\\
38	0.096673030581396\\
39	0.0971797593611696\\
40	0.0975226181278802\\
41	0.0977484537807249\\
42	0.0978972576360758\\
};
\addlegendentry{Human}

\addplot [color=blue, line width=1.5pt]
  table[row sep=crcr]{%
1	0\\
2	0\\
3	0\\
4	0.000173768304220472\\
5	0.000683340317127598\\
6	0.00166756185218549\\
7	0.00323699535906196\\
8	0.00541664848686698\\
9	0.00823153178171179\\
10	0.011625417917633\\
11	0.0155128315460071\\
12	0.0197968155841297\\
13	0.0243796782107093\\
14	0.0291693286669784\\
15	0.0340825667166743\\
16	0.0390463649784466\\
17	0.0439978961131831\\
18	0.048883826966211\\
19	0.0536592472128032\\
20	0.0582864751072915\\
21	0.0627338944215354\\
22	0.066974941131081\\
23	0.0709872891308957\\
24	0.0747522241116914\\
25	0.0782542112111682\\
26	0.0814806885448375\\
27	0.0844221199291596\\
28	0.0870723033586118\\
29	0.0894288498076343\\
30	0.0914936725545547\\
31	0.0932733247993308\\
32	0.094779072439055\\
33	0.0960266524845834\\
34	0.0970357445661105\\
35	0.0978292262634884\\
36	0.0984323200884513\\
37	0.0988718110226534\\
38	0.0991753713179742\\
39	0.0993708984709871\\
40	0.0994858928929011\\
41	0.0995456287919089\\
42	0.0995706887604917\\
};
\addlegendentry{HVROC - high var}

\addplot [color=blue, dashed, line width=1.5pt]
  table[row sep=crcr]{%
1	0\\
2	0\\
3	0\\
4	9.89307920793721e-05\\
5	0.00041903952393496\\
6	0.00108132460443249\\
7	0.00219191006666538\\
8	0.00380228053091077\\
9	0.00595466729799208\\
10	0.00863479270417562\\
11	0.0118018021956808\\
12	0.0153994234544624\\
13	0.019363450622284\\
14	0.0236268038914196\\
15	0.0281228075420431\\
16	0.0327872067946526\\
17	0.0375593154750004\\
18	0.0423825736643828\\
19	0.0472047235785056\\
20	0.0519777541500592\\
21	0.0566577226649825\\
22	0.0612045572575331\\
23	0.0655819000809014\\
24	0.0697569993996751\\
25	0.0737006740280303\\
26	0.0773873996577365\\
27	0.0807955670122189\\
28	0.0839079291623268\\
29	0.086712187512065\\
30	0.0892015972198209\\
31	0.0913754469818088\\
32	0.0932392798858805\\
33	0.0948047664287121\\
34	0.0960892306675019\\
35	0.097114891821243\\
36	0.0979079345131613\\
37	0.098497635786773\\
38	0.0989156085874668\\
39	0.0991949915793578\\
40	0.099369565807748\\
41	0.0994711416206144\\
42	0.0995260331214027\\
};
\addlegendentry{HVROC - low var}

\addplot [color=black, line width=1.5pt]
  table[row sep=crcr]{%
1	0\\
2	0\\
3	0\\
4	0.000160776601208743\\
5	0.000607548437614762\\
6	0.00143890432847065\\
7	0.00273022715532772\\
8	0.00448914923058673\\
9	0.00673597759578837\\
10	0.0094267611583738\\
11	0.0124986600541397\\
12	0.0158836207056859\\
13	0.0195159072736041\\
14	0.0233360074281092\\
15	0.0272921569173035\\
16	0.0313403837114881\\
17	0.0354437059499642\\
18	0.0395709115789852\\
19	0.0436951990002937\\
20	0.0477928606634806\\
21	0.0518421126889901\\
22	0.0558220690555467\\
23	0.0597118304152342\\
24	0.0634897477396826\\
25	0.0671329602916472\\
26	0.0706172961603029\\
27	0.073917615609213\\
28	0.0770086453422313\\
29	0.0798662631590253\\
30	0.0824690835064294\\
31	0.0848001273852305\\
32	0.0868483336689981\\
33	0.0886096841230968\\
34	0.0900878090508508\\
35	0.0912940350203642\\
36	0.0922469377933815\\
37	0.0929716621635991\\
38	0.0934990904331184\\
39	0.093864656617981\\
40	0.0941068458109622\\
41	0.094263473546483\\
42	0.0943669067277766\\
};
\addlegendentry{LQR}

\end{axis}

\begin{axis}[%
width=3.044in,
height=3.044in,
at={(4.253in,0.406in)},
scale only axis,
xmin=0,
xmax=42,
xtick={0,10,20,30,40},
xlabel style={font=\color{white!15!black}},
xlabel={Time Steps $t$},
ymin=0,
ymax=3.5e-05,
ytick={    0, 1e-05, 2e-05, 3e-05},
ylabel style={font=\color{white!15!black}},
ylabel={cov($ p_x$) [$\si{m^2}$]},
axis background/.style={fill=white}
]
\addplot [color=red, line width=1.5pt, forget plot]
  table[row sep=crcr]{%
1	0\\
2	0\\
3	0\\
4	0\\
5	1.56358589890643e-09\\
6	2.04428152051469e-08\\
7	1.09234727747626e-07\\
8	3.68942915782852e-07\\
9	9.36306166601743e-07\\
10	1.95185625432392e-06\\
11	3.52207892543726e-06\\
12	5.68830659235383e-06\\
13	8.41003756838957e-06\\
14	1.15649014176129e-05\\
15	1.49632679540518e-05\\
16	1.83730234310116e-05\\
17	2.15491270516847e-05\\
18	2.42628550108893e-05\\
19	2.63267648187383e-05\\
20	2.76128122516359e-05\\
21	2.80623588874864e-05\\
22	2.76879489443399e-05\\
23	2.65676396578566e-05\\
24	2.48333120977352e-05\\
25	2.26547854438992e-05\\
26	2.02217316743245e-05\\
27	1.77253678634669e-05\\
28	1.53417154430526e-05\\
29	1.32178772322692e-05\\
30	1.14623180697544e-05\\
31	1.01396063057888e-05\\
32	9.26958626215119e-06\\
33	8.83058352176849e-06\\
34	8.76605712579578e-06\\
35	8.99407948750512e-06\\
36	9.418847620551e-06\\
37	9.94341603647838e-06\\
38	1.04830632893767e-05\\
39	1.09775781730156e-05\\
40	1.13996338787725e-05\\
41	1.17528372410907e-05\\
42	1.20589388304043e-05\\
};
\addplot [color=blue, line width=1.5pt, forget plot]
  table[row sep=crcr]{%
1	0\\
2	0\\
3	0\\
4	0\\
5	1.56358589890643e-09\\
6	2.04428152051469e-08\\
7	1.09234727747626e-07\\
8	3.68289731604705e-07\\
9	9.28398652376796e-07\\
10	1.91461533334136e-06\\
11	3.41595194156767e-06\\
12	5.46880088220104e-06\\
13	8.04822177910865e-06\\
14	1.10664229286358e-05\\
15	1.43803748107208e-05\\
16	1.78074198767198e-05\\
17	2.1145571227808e-05\\
18	2.41948519230265e-05\\
19	2.67767337910249e-05\\
20	2.87497027084175e-05\\
21	3.00198350553468e-05\\
22	3.05461042436956e-05\\
23	3.03407744439301e-05\\
24	2.94654402271971e-05\\
25	2.80232893255273e-05\\
26	2.61482715824448e-05\\
27	2.39920627303377e-05\\
28	2.17099366869822e-05\\
29	1.94468543985676e-05\\
30	1.732522418427e-05\\
31	1.54358162445505e-05\\
32	1.38330696278167e-05\\
33	1.25353804974515e-05\\
34	1.15299906496074e-05\\
35	1.07812598513715e-05\\
36	1.02406137274204e-05\\
37	9.85617281664416e-06\\
38	9.5805179130134e-06\\
39	9.3760272722512e-06\\
40	9.21734051018973e-06\\
41	9.09040333607037e-06\\
42	8.98934364943858e-06\\
};
\addplot [color=blue, dashed, line width=1.5pt, forget plot]
  table[row sep=crcr]{%
1	0\\
2	0\\
3	0\\
4	0\\
5	1.56358589890643e-09\\
6	2.04428152051469e-08\\
7	1.09234727747626e-07\\
8	3.68164214947203e-07\\
9	9.2672504592784e-07\\
10	1.90487207544056e-06\\
11	3.37592506491108e-06\\
12	5.34241257789604e-06\\
13	7.72929128861811e-06\\
14	1.03949310595002e-05\\
15	1.31548157499339e-05\\
16	1.58104018776612e-05\\
17	1.81763580771193e-05\\
18	2.010173363163e-05\\
19	2.14831308457692e-05\\
20	2.22698952641729e-05\\
21	2.2462492699664e-05\\
22	2.21058553826286e-05\\
23	2.12796784164911e-05\\
24	2.00874109783955e-05\\
25	1.86452185405217e-05\\
26	1.707174939435e-05\\
27	1.54792073121989e-05\\
28	1.39659997390602e-05\\
29	1.26110862255093e-05\\
30	1.14701112000113e-05\\
31	1.05734588480728e-05\\
32	9.92642864021833e-06\\
33	9.5116632593571e-06\\
34	9.29371794193876e-06\\
35	9.22535968415138e-06\\
36	9.25486828780541e-06\\
37	9.33336377798576e-06\\
38	9.42124422833681e-06\\
39	9.49294592653739e-06\\
40	9.53911276545977e-06\\
41	9.56376863863276e-06\\
42	9.57785557131989e-06\\
};
\addplot [color=black, line width=1.5pt, forget plot]
  table[row sep=crcr]{%
1	0\\
2	0\\
3	0\\
4	0\\
5	1.56358589890643e-09\\
6	2.04428152051469e-08\\
7	1.09234727747626e-07\\
8	3.68168205961438e-07\\
9	9.27281234762562e-07\\
10	1.91145135516548e-06\\
11	3.40879056772933e-06\\
12	5.44572300598659e-06\\
13	7.971517264504e-06\\
14	1.08585040489373e-05\\
15	1.39185138947682e-05\\
16	1.69300852596759e-05\\
17	1.96691223012474e-05\\
18	2.19367434617754e-05\\
19	2.35803454289946e-05\\
20	2.45061905521173e-05\\
21	2.46835600377287e-05\\
22	2.41416559739151e-05\\
23	2.29610081359834e-05\\
24	2.12612696854193e-05\\
25	1.91871658830581e-05\\
26	1.68941221397472e-05\\
27	1.45348163394074e-05\\
28	1.22476180977682e-05\\
29	1.01476002925824e-05\\
30	8.3205270594167e-06\\
31	6.81994313078142e-06\\
32	5.66723633799641e-06\\
33	4.85432827713827e-06\\
34	4.3485000949636e-06\\
35	4.09884369534424e-06\\
36	4.04377316168245e-06\\
37	4.11902716122121e-06\\
38	4.2658257824129e-06\\
39	4.43830588859619e-06\\
40	4.6086675958643e-06\\
41	4.76725122206518e-06\\
42	4.91678351217044e-06\\
};
\end{axis}

\begin{axis}[%
width=7.812in,
height=3.125in,
at={(0in,0in)},
scale only axis,
xmin=0,
xmax=1,
ymin=0,
ymax=1,
axis line style={draw=none},
ticks=none,
axis x line*=bottom,
axis y line*=left
]
\end{axis}
\end{tikzpicture}
        \caption{Example~1: Manipulation of a human hand with $m = \SI{1}{kg}$, $p_{x,\mathrm{ref}} = \SI{0.1}{m}$, $N~=~42$. }
    \label{fig:simHand}
    \end{subfigure}
    \begin{subfigure}[b]{0.475\textwidth}
        \centering
    \resizebox{3.5in}{!}{
    \input{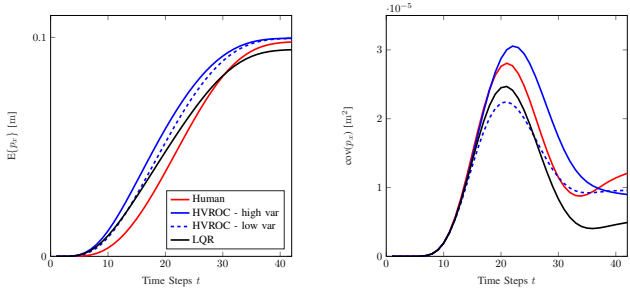}
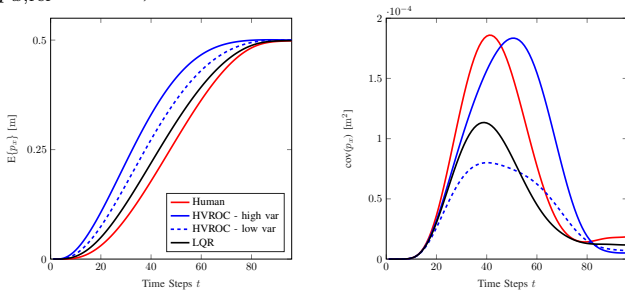}
    \caption{Example~2: Manipulation of a handheld tool with $m = \SI{10}{kg}$, $p_{x,\mathrm{ref}} = \SI{0.5}{m}$, $N=96$. }
    \label{fig:simTool}
    \end{subfigure}
    \caption{Mean and variance of the position $p_x$. \textit{Human}-alone behavior is depicted in red. The joint behavior of our proposed \textit{HVROC} with the simulated human is shown in blue; solid and dotted lines represent a high and low desired variability in the task-irrelevant area, respectively. Black depicts the joint behavior of the \textit{LQR} and human.}
    \label{fig:sim}
\end{figure}

\begin{table}[htbp]
    \caption{Simulation results comparing the \textit{HVROC} in two parametrizations with the \textit{human}-alone and the \textit{LQR} in two example scenarios.}
\begin{center}
      \begin{tabular}{| c || c | c | c | c |}
\hline
\makecell[l]{} & \makecell{ \textbf{Human} \\ alone } & \makecell{\textbf{HVROC}  \\ high var} & \makecell{\textbf{HVROC}  \\ low var} & \makecell{ \textbf{LQR} }\\
\hline
 \multicolumn{5}{|c|}{Example~1: Human hand} \\
\hline
\makecell{$\mathrm{cov} ( p_{x,N/2} )$ \\ in $ 10^{-5}  \si{m^2}$} &  $2.8 $  & $3.0$ & $2.3$ & $2.5$  \\
\hline
\makecell{ $\mathrm{cov} ( p_{x,N} )$ \\ in $ 10^{-6} \si{m^2}$} & $12.1$   & $9.0$ & $9.6$ & $4.9$ \\
\hline
 \makecell{ $|\mathrm{E} \{ p_{x,N} \} - p_{x,\mathrm{ref}} | $ \\ in $\si{mm}$ } &  $ 2.1$  & $0.4$ & $0.5$ & $5.6$\\  
 \hline
 \makecell{ $t(|\mathrm{E} \{ p_{x,t} \} - p_{x,\mathrm{ref}} | $ \\ $ < 0.05 \cdot p_{x,\mathrm{ref}})$} & $37$ & $33$ & $34$ & - \\
\hline
\multicolumn{5}{|c|}{Example~2: Handheld tool manipulation} \\
\hline
\makecell{$\mathrm{cov} ( p_{x,N/2} )$ \\ in $ 10^{-5}  \si{m^2}$} &  $16.6 $  & $18.1$ & $7.6$ & $9.3$  \\
\hline
\makecell{ $\mathrm{cov} ( p_{x,N} )$ \\ in $ 10^{-6} \si{m^2}$} & $1.9$   & $0.5$ & $0.7$ & $1.2$ \\
\hline
 \makecell{ $|\mathrm{E} \{ p_{x,N} \} - p_{x,\mathrm{ref}} | $ \\ in $\si{mm}$ } &  $ 15.0$  & $0.03$ & $0.05$ & $11.0$\\  
\hline
\makecell{ $t(|\mathrm{E} \{ p_{x,t} \} - p_{x,\mathrm{ref}} | $ \\ $ < 0.05 \cdot p_{x,\mathrm{ref}})$} & $78$ & $63$ & $71$ & $76$ \\
\hline
    \end{tabular}
    \label{table:results}
\end{center}
\end{table}

The joint behavior of the \textit{HVROC} interacting with the human is depicted in blue in Fig.~\ref{fig:sim}. Solid and dotted lines represent a high and low desired variability in the task-irrelevant area, specified through either weighing $s_\mathrm{highMidVar}$ or $s_\mathrm{lowMidVar}$. Quantitative results are provided in Tbl.~\ref{table:results}. 

Regarding the variability in the task-irrelevant area it can be observed that the joint variance behaves as intended: a high variance between the start and goal point, similar to the \textit{human}-alone, can be identified in the solid line. A restricted variance in $t = \nicefrac{N}{2}$, compared both to \textit{human}-alone and \textit{LQR}, shows for the dotted line. 

Looking at the goal point variability, our approach leads to a reduced position variance at $t = N$ compared to the \textit{human}-alone for both examples and also compared to the \textit{LQR} in Example~2. 

Furthermore, the \textit{HVROC} exhibits better mean performance in reaching the goal point concerning accuracy and time. Looking at Example~1, \textit{HVROC} accuracy outperforms the \textit{human}-alone by factor $4$ and the \textit{LQR} by factor $10$. The improvement is even more visible for Example~2, where the \textit{HVROC} approaches reach the goal point with a remaining error of $|\mathrm{E} \{ p_{x,N} \} - p_{x,\mathrm{ref}} | = \SI{0.03}{mm} / \SI{0.05}{mm}$ compared to $\SI{15.0}{mm}$ and $\SI{11.0}{mm}$ for \textit{human}-alone and \textit{LQR}, respectively. Additionally, all \textit{HVROC} results reach and stay within a $5 \% $ error bandwidth around the goal point clearly faster then the \textit{LQR}, with the latter even showing a stationary error $> 5 \% $ for Example~1. 

In our examples both dimensions are parameterized identical, therefore the same results can be observed for the $y$-dimension. 

\subsection{Discussion}

Our introduced \textit{HVROC} allows to find automation parameters that optimize the joint behavior of a deterministic automation and a stochastic human. 
Using the introduced objective function, the variance of the human-machine interaction can be shaped as desired, e.g. to restrict the overall variance or keep it similar to the human natural behavior of being elevated in task-irrelevant areas. 
Further work may analyze the effect of different variability patterns on the human experience. 

Additionally, the joint mean behavior of the overall system is improved using the \textit{HVROC}. 
As mentioned by~\cite{Todorov.2005, Karg.2022}, the human noise parameters influence its mean behavior, which regular controller designs do not take into consideration. 
Our approach however, explicitly takes into account the effect which the human noise parameters exert on the joint mean, leading to an improved accuracy and improved speed in reaching the goal point. 
It must be noted that the presented choice of objective function weights only represent one sample choice. Other weights may be chosen according to the desired importance on accuracy or variability patterns. Furthermore, other objective functions that consider the joint behavior may be defined. 

The comparison to the \textit{LQR} shows the need for our optimally and variability-respecting parameterized \textit{HVROC}: the \textit{LQR} shows worse overall performance in accuracy and speed and a strong intervention in human natural variability.



\section{Conclusion}

In this paper we propose a novel control approach for physical Human-Machine Interaction. State-of-the-art solutions are often 
model-based, and therefore generalizable, 
however, they are exclusively based on deterministic models of the human behavior. This results in a mismatch to neuroscientific literature which models the human stochastically. Based on these stochastic human models, we introduce formulations that allow the computation of the mean and variance of a coupled human-machine system. The resulting formulas are then used as a basis for an optimization-based parametrization of the automation.
The benefits of using the stochastic model are twofold: For once, the overall performance of the joint interaction is enhanced, since the effect of human noise processes on the mean behavior is incorporated in the control design. 
Additionally, our controller design allows the shaping of the resulting joint variance: The human natural variability can be adjusted depending on the user's preference.
\section*{APPENDIX: Proof of Lemma 1}

\begin{proof}\label{proof}
    Extending \eqref{eq:E_human} by the automations systems estimation~\eqref{eq:E_aut} and substituting~\eqref{eq:u_aut} and~\eqref{eq:sysOutA}, \eqref{eq:meanInteraction} results.  

    Calculating the variance of the extended system state including $\bigx_{t+1}$, $\hat{\bigx}_{\mathrm{H},t+1}$ and $\hat{\bigx}_{\mathrm{A},t+1}$, we derive:
    \begin{align}
        & \mathrm{cov}
\begin{pmatrix}
    \begin{bmatrix}
        \bigx_{t+1} \\
        \hat{\bigx}_{\mathrm{H},t+1}\\
        \hat{\bigx}_{\mathrm{A},t+1}
    \end{bmatrix}
\end{pmatrix}
= \bigmathA_t \; \mathrm{cov}
\begin{pmatrix}
    \begin{bmatrix}
        \bigx_{t} \\
        \hat{\bigx}_{\mathrm{H},t}\\
        \hat{\bigx}_{\mathrm{A},t}
    \end{bmatrix}
\end{pmatrix} \bigmathA_t^{\intercal} \notag \\
&+
\mathrm{E}
\begin{Bmatrix}
    \begin{bmatrix}
        - \sum_i \varepsilon_t^{(i)} \bigC_i \bigL_{\mathrm{H},t} \hat{\bigx}_{\mathrm{H},t}\\
        \bigK_{\mathrm{H},t} \sum_i \epsilon_t^{(i)} \bigD_i \bigx_t \\
        \boldsymbol{0} \\
    \end{bmatrix}
    \begin{bmatrix}
        - \sum_i \varepsilon_t^{(i)} \bigC_i \bigL_{\mathrm{H},t} \hat{\bigx}_{\mathrm{H},t}\\
        \bigK_{\mathrm{H},t} \sum_i \epsilon_t^{(i)} \bigD_i \bigx_t \\
        \boldsymbol{0} \\
    \end{bmatrix}^\intercal
\end{Bmatrix}\notag \\
&+ \mathrm{E}
    \begin{Bmatrix}
    \begin{bmatrix}
        \bigxi_t\\
       \bigK_{\mathrm{H},t} \bigomega_{\mathrm{H},t} \\
       \bigK_{\mathrm{A},t} \bigomega_{\mathrm{A},t} \\
    \end{bmatrix}
    \begin{bmatrix}
        \bigxi_t\\
       \bigK_{\mathrm{H},t} \bigomega_{\mathrm{H},t} \\
       \bigK_{\mathrm{A},t} \bigomega_{\mathrm{A},t} \\
    \end{bmatrix}^\intercal \label{eq:proof_1}
\end{Bmatrix}
\end{align}
by exploiting the zero-mean and independence to each other of $\bigvarepsilon_t$, $\bigepsilon_t$, $\bigxi_t$, $\bigomega_{\mathrm{H},t}$ and $\bigomega_{\mathrm{A},t}$ as well as their independence to $\bigx_{t}$, $\hat{\bigx}_{\mathrm{H},t}$ and $\hat{\bigx}_{\mathrm{A},t}$. Analogously to~\cite{Karg.2022} we can further simplify each element of the matrix of the second summand of~\eqref{eq:proof_1} due to 
above mentioned independencies as well as $\E{\varepsilon_t^{(i)} \varepsilon_t^{(j)}} = \E{\epsilon_t^{(i)} \epsilon_t^{(j)}} = \delta_{ij}$ ($\delta_{ij} = 1$ for $i=j$, $\delta_{ij} = 0$ for $i\neq j$) and $\E{\varepsilon_t^{(i)} \epsilon_t^{(j)}} = 0$ ($\forall i,j$): 

\begin{align}
    &\E{\sum_{i} \varepsilon_t^{(i)} \bigC_i \bigL_t \hat{\bigx}_{\mathrm{H},t} \hat{\bigx}_{\mathrm{H},t}^{\intercal} \bigL_t^{\intercal} \sum_{j} \bigC_j^{\intercal} \varepsilon_t^{(j)}} \notag \\
    &= \sum_{i} \bigC_i \bigL_t \E{\hat{\bigx}_{\mathrm{H},t} \hat{\bigx}_{\mathrm{H},t}^{\intercal}} \bigL_t^{\intercal} \bigC_i^{\intercal} \label{eq:proof_2} \\
    &\E{-\sum_{i} \varepsilon_t^{(i)} \bigC_i \bigL_t \hat{\bigx}_{\mathrm{H},t} \bigx_t^{\intercal} \sum_{j} \bigD_j^{\intercal} \epsilon_t^{(j)} \bigK_t^{\intercal}} = \boldsymbol{0} \\
    &\E{-\bigK_t \sum_{i} \epsilon_t^{(i)} \bigD_i \bigx_t \hat{\bigx}_{\mathrm{H},t}^{\intercal} \bigL_t \sum_{j} \bigC_j^{\intercal} \varepsilon_t^{(j)}} = \boldsymbol{0} \\
    &\E{\bigK_t \sum_{i} \epsilon_t^{(i)} \bigD_i \bigx_t \bigx_t^{\intercal} \sum_{j} \bigD_j^{\intercal} \epsilon_t^{(j)} \bigK_t^{\intercal}} \notag \\ 
    &= \sum_i \bigK_t \bigD_i \E{ \bigx_t \bigx_t^{\intercal} } \bigD_i^{\intercal} \bigK_t^{\intercal} \label{eq:proof_3}.
\end{align}
We apply  $\E{\hat{\bigx}_{\mathrm{H},t} \hat{\bigx}_{\mathrm{H},t}^{\intercal}} = \bigOmega_{t}^{\hat{\bigx}_{\mathrm{H}}} + \E{\hat{\bigx}_{\mathrm{H},t}}\E{\hat{\bigx}_{\mathrm{H},t}}^{\intercal}$ in \eqref{eq:proof_2} and $\E{ \bigx_t \bigx_t^{\intercal} } = \bigOmega_{t}^{\bigx} + \E{\bigx_t}\E{\bigx_t}^{\intercal}$ in \eqref{eq:proof_3}, resulting in
\begin{align}
    \begin{bmatrix}
        \bar{\bigOmega}^{\hat{\bigx}_{\mathrm{H}}}_t & \boldsymbol{0} & \boldsymbol{0} \\
        \boldsymbol{0} & \bar{\bigOmega}^{\bigx}_t & \boldsymbol{0} \\
        \boldsymbol{0} & \boldsymbol{0} & \boldsymbol{0}\\
    \end{bmatrix}
\end{align}
for the second summand of~\eqref{eq:proof_1}.
Due to the independence of $\bigxi_t$, $\bigomega_{\mathrm{H},t}$ and $\bigomega_{\mathrm{H},t}$, the third summand reduces to
\begin{align}
    \begin{bmatrix}
       \bigOmega^{\bigxi} & \boldsymbol{0} & \boldsymbol{0} \\
        \boldsymbol{0} &\bigK_{\mathrm{H},t}\bigOmega^{\bigomega_{\mathrm{H}}}\bigK_{\mathrm{H},t}^{\intercal} & \boldsymbol{0} \\
        \boldsymbol{0} & \boldsymbol{0} & \boldsymbol{0} \\
    \end{bmatrix},
    \end{align}
leading to the simplification of~\eqref{eq:proof_1} to \eqref{eq:varianceInteraction}.
\end{proof}

\addtolength{\textheight}{-14cm}   






\bibliography{1_tex/bibliography.bib}

\begin{thebibliography}{10}
\providecommand{\url}[1]{#1}
\csname url@samestyle\endcsname
\providecommand{\newblock}{\relax}
\providecommand{\bibinfo}[2]{#2}
\providecommand{\BIBentrySTDinterwordspacing}{\spaceskip=0pt\relax}
\providecommand{\BIBentryALTinterwordstretchfactor}{4}
\providecommand{\BIBentryALTinterwordspacing}{\spaceskip=\fontdimen2\font plus
\BIBentryALTinterwordstretchfactor\fontdimen3\font minus \fontdimen4\font\relax}
\providecommand{\BIBforeignlanguage}[2]{{%
\expandafter\ifx\csname l@#1\endcsname\relax
\typeout{** WARNING: IEEEtran.bst: No hyphenation pattern has been}%
\typeout{** loaded for the language `#1'. Using the pattern for}%
\typeout{** the default language instead.}%
\else
\language=\csname l@#1\endcsname
\fi
#2}}
\providecommand{\BIBdecl}{\relax}
\BIBdecl

\bibitem{Carlson.2012}
T.~Carlson and Y.~Demiris, ``Collaborative control for a robotic wheelchair: evaluation of performance, attention, and workload,'' \emph{IEEE transactions on systems, man, and cybernetics. Part B, Cybernetics : a publication of the IEEE Systems, Man, and Cybernetics Society}, vol.~42, no.~3, pp. 876--888, 2012.

\bibitem{Abbink.2012}
D.~A. Abbink, M.~Mulder, and E.~R. Boer, ``Haptic shared control: smoothly shifting control authority?'' \emph{Cognition, Technology {\&} Work}, vol.~14, no.~1, pp. 19--28, 2012.

\bibitem{Inga.2023}
J.~Inga, M.~Ruess, J.~H. Robens, T.~Nelius, S.~Rothfu{\ss}, S.~Kille, P.~Dahlinger, A.~Lindenmann, R.~Thomaschke, G.~Neumann, S.~Matthiesen, S.~Hohmann, and A.~Kiesel, ``Human-machine symbiosis: A multivariate perspective for physically coupled human-machine systems,'' \emph{Int. Jour. of Human-Computer Studies}, vol. 170, p. 102926, 2023.

\bibitem{Steele.2001}
M.~Steele and R.~B. Gillespie, ``Shared control between human and machine: Using a haptic steering wheel to aid in land vehicle guidance,'' \emph{Proc. of the Human Factors and Ergonomics Society}, vol.~45, 2001.

\bibitem{Flemisch.2008}
F.~Flemisch, J.~Kelsch, and C.~L{\"o}per, ``Cooperative control and active interfaces for vehilce assistance and automation,'' \emph{FISITA World automotive Congress; Munich; 2008}, 2008.

\bibitem{Lecours.2012}
A.~Lecours, B.~Mayer-St-Onge, and C.~Gosselin, ``Variable admittance control of a four-degree-of-freedom intelligent assist device,'' in \emph{IEEE Int. Conf. on Robotics and Automation (ICRA)}, 2012, pp. 3903--3908.

\bibitem{Tonietti.2005}
G.~Tonietti, R.~Schiavi, and A.~Bicchi, ``Design and control of a variable stiffness actuator for safe and fast physical human/robot interaction,'' in \emph{Proc. of the 2005 IEEE Int. Conf. on Robotics and Automation}.\hskip 1em plus 0.5em minus 0.4em\relax IEEE, 2005, pp. 526--531.

\bibitem{Dong.2020}
J.~Dong, J.~Xu, Q.~Zhou, and S.~Hu, ``Physical human--robot interaction force control method based on adaptive variable impedance,'' \emph{Journ. of the Franklin Institute}, vol. 357, no.~12, pp. 7864--7878, 2020.

\bibitem{Ficuciello.2015}
F.~Ficuciello, L.~Villani, and B.~Siciliano, ``Variable impedance control of redundant manipulators for intuitive human--robot physical interaction,'' \emph{IEEE Trans. on Robot.}, vol.~31, no.~4, pp. 850--863, 2015.

\bibitem{Li.2017}
Z.~Li, J.~Liu, Z.~Huang, Y.~Peng, H.~Pu, and L.~Ding, ``Adaptive impedance control of human--robot cooperation using reinforcement learning,'' \emph{IEEE Transactions on Industrial Electronics}, vol.~64, no.~10, pp. 8013--8022, 2017.

\bibitem{Franceschi.2023b}
P.~Franceschi, M.~Maccarini, D.~Piga, M.~Beschi, and L.~Roveda, ``Human preferences' optimization in phri collaborative tasks,'' in \emph{20th Int. Conf. on Ubiquitous Robots (UR)}, 2023, pp. 693--699.

\bibitem{Flad.2017}
M.~Flad, L.~Frohlich, and S.~Hohmann, ``Cooperative shared control driver assistance systems based on motion primitives and differential games,'' \emph{IEEE Trans. Human-Mach. Syst.}, vol.~47, no.~5, pp. 711--722, 2017.

\bibitem{Music.2020}
S.~Musi{\'c} and S.~Hirche, ``Haptic shared control for human-robot collaboration: A game-theoretical approach,'' \emph{IFAC-PapersOnLine}, vol.~53, no.~2, pp. 10\,216--10\,222, 2020.

\bibitem{Na.2013}
X.~Na and D.~J. Cole, ``Linear quadratic game and non-cooperative predictive methods for potential application to modelling driver--afs interactive steering control,'' \emph{Vehicle System Dynamics}, vol.~51, no.~2, pp. 165--198, 2013.

\bibitem{Varga.2024}
B.~Varga, ``Toward adaptive cooperation: Model-based shared control using lq-differential games,'' in \emph{[Accepted at] Acta Polytechnica Hungarica 2024}, 2024.

\bibitem{Gallivan.2018}
J.~P. Gallivan, C.~S. Chapman, D.~M. Wolpert, and J.~R. Flanagan, ``Decision-making in sensorimotor control,'' \emph{Nat. Rev. Neurosci.}, vol.~19, no.~9, pp. 519--534, 2018.

\bibitem{Todorov.2002}
E.~Todorov and M.~I. Jordan, ``Optimal feedback control as a theory of motor coordination,'' \emph{Nat. Neurosci.}, vol.~5, no.~11, pp. 1226--1235, 2002.

\bibitem{Abend.1982}
W.~Abend, E.~Bizzi, and P.~Morasso, ``Human arm trajectory formation,'' \emph{Brain}, no. 105, pp. 331--348, 1982.

\bibitem{Harris.1998}
C.~M. Harris and D.~M. Wolpert, ``Signal-dependent noise determines motor planning,'' \emph{Nature}, vol. 394, no. 6695, pp. 780--784, 1998.

\bibitem{Todorov.2004}
E.~Todorov, ``Optimality principles in sensorimotor control,'' \emph{Nat. Neurosci.}, vol.~7, no.~9, pp. 907--915, 2004.

\bibitem{Todorov.2005}
------, ``Stochastic optimal control and estimation methods adapted to the noise characteristics of the sensorimotor system,'' \emph{Neural Comput.}, no.~17, pp. 1084--1108, 2005.

\bibitem{Karg.2023}
P.~Karg, S.~Stoll, S.~Rothfu{\ss}, and S.~Hohmann, ``Validation of stochastic optimal control models for goal-directed human movements on the example of human driving behavior,'' \emph{IFAC-PapersOnLine}, vol.~56, no.~2, pp. 8320--8326, 2023.

\bibitem{Mitrovic.2011}
D.~Mitrovic, S.~Klanke, and S.~Vijayakumar, ``Learning impedance control of antagonistic systems based on stochastic optimization principles,'' \emph{The Int. Jour. of Robotics Research}, vol.~30, no.~5, pp. 556--573, 2011.

\bibitem{Karg.2022}
P.~Karg, S.~Stoll, S.~Rothfu{\ss}, and S.~Hohmann, ``Inverse stochastic optimal control for linear-quadratic gaussian and linear-quadratic sensorimotor control models,'' in \emph{IEEE 61st Conf. on Decision and Control (CDC)}.\hskip 1em plus 0.5em minus 0.4em\relax IEEE, 2022, pp. 2801--2808.

\bibitem{Anderson.1989}
B.~D.~O. Anderson and J.~B. Moore, \emph{Optimal Control: Linear Quadratic Methods}, 1989.

\bibitem{Karg.2024}
P.~Karg, M.~Hess, B.~Varga, and S.~Hohmann, ``Bi-level-based inverse stochastic optimal control,'' in \emph{[Accepted at] 2024 European Control Conference}, 2024.

\end{thebibliography}
\bibliographystyle{IEEEtran}


\end{document}